\newcommand{\bra}[1]{{\left\langle{#1}\right\vert}}
\newcommand{\ket}[1]{{\left\vert{#1}\right\rangle}}
\newcommand{\qw}[1][-1]{\ar @{-} [0,#1]}
\newcommand{\qwx}[1][-1]{\ar @{-} [#1,0]}
\newcommand{\gate}[1]{*+<.6em>{#1} \POS ="i","i"+UR;"i"+UL **\dir{-};"i"+DL **\dir{-};"i"+DR **\dir{-};"i"+UR **\dir{-},"i" \qw}
\newcommand{\control}{*!<0em,.025em>-=-<.2em>{\bullet}}
\newcommand{\ctrl}[1]{\control \qwx[#1] \qw}
\newcommand{\targ}{*+<.02em,.02em>{\xy ="i","i"-<.39em,0em>;"i"+<.39em,0em> **\dir{-}, "i"-<0em,.39em>;"i"+<0em,.39em> **\dir{-},"i"*\xycircle<.4em>{} \endxy} \qw}
\newcommand{\multigate}[2]{*+<1em,.9em>{\hphantom{#2}} \POS [0,0]="i",[0,0].[#1,0]="e",!C *{#2},"e"+UR;"e"+UL **\dir{-};"e"+DL **\dir{-};"e"+DR **\dir{-};"e"+UR **\dir{-},"i" \qw}
\newcommand{\ghost}[1]{*+<1em,.9em>{\hphantom{#1}} \qw}
\newcommand{\gategroup}[6]{\POS"#1,#2"."#3,#2"."#1,#4"."#3,#4"!C*+<#5>\frm{#6}}
\newcommand{\lstick}[1]{*!R!<.5em,0em>=<0em>{#1}}
\newcommand{\Qcircuit}{\xymatrix @*=<0em>}
 \def\be{\begin{equation}}
 \def\ee{\end{equation}}
 \def\bes{\begin{eqnarray}}
 \def\ees{\end{eqnarray}}
 \theoremstyle{plain}
\newtheorem{defi}{Definition}
\newtheorem{theorem}{Theorem}
\newtheorem{cor}{Corollary}
\newtheorem{oracle}{Data Input}
\newtheorem{result}{Result}
\newcommand{\Ord}[1]{\mathcal{O}\left( #1 \right)} 
\newcommand{\tOrd}[1]{\tilde{\mathcal{O}}\left( #1 \right)} 
\begin{document}

\title{Batched quantum state exponentiation and quantum Hebbian learning}
\author{Thomas R. Bromley}
\email{tom@xanadu.ai}
\author{Patrick Rebentrost}
\email{pr@patrickre.com}

\affiliation{Xanadu, 777 Bay St, Toronto, M5G 2C8, Canada}

\date{\today}

\begin{abstract}
Machine learning is a crucial aspect of artificial intelligence. This paper details an approach for quantum Hebbian learning through a batched version of quantum state exponentiation. Here, batches of quantum data are interacted with learning and processing quantum bits (qubits) by a series of elementary controlled partial swap operations, resulting in a Hamiltonian simulation of the statistical ensemble of the data. We decompose this elementary operation into one and two qubit quantum gates from the Clifford+$T$ set and use the decomposition to perform an efficiency analysis. Our construction of quantum Hebbian learning is motivated by extension from the established classical approach, and it can be used to find details about the data such as eigenvalues through phase estimation. 
This work contributes to the near-term development and implementation of quantum machine learning techniques.
\end{abstract}
\maketitle

\section{Introduction}

Machine learning encompasses a series of techniques that allow computers to solve problems without explicity telling them how to do so~\cite{mackay2003information}. In supervised learning, the machine is first taught to solve the problem on a series of training data. This learning stage is a crucial element in determining the performance of a machine learning algorithm. One particularly fruitful machine learning technique is to construct an artificial neural network, represented by an interacting collection of binary valued neurons. The applications of machine learning are numerous and include, for example, finance, biotechnology, e-commerce, chemistry, insurance and security. In particular, neural networks have been successfully used in finance for portfolio analysis~\cite{barr1998predictive} and credit approval~\cite{norris1999system}, as well as in e-commerce for user ratings of online stores~\cite{lasa2007system}.

The Hebbian approach is the most natural learning method for neural networks that are fully visible and with undirected connections~\cite{hebb1949organization}. Hebbian learning specifies the connection strength between neurons according to the number of times that they fire together within training data. The output of Hebbian learning is a real symmetric weighting matrix with zero diagonal, whose elements correspond to the connection weights between neuron pairs. This matrix is then used as a component within machine learning algorithms. Undirected and fully visible neural networks, such as the canonical Hopfield network~\cite{hopfield1982neural}, are often utilized as an associative memory for pattern recognition, as well as for optimization problems such as the traveling salesman problem~\cite{mackay2003information}.

Quantum machine learning (QML) combines parts of quantum mechanics, such as quantum algorithms, with machine learning~\cite{biamonte2016quantum}. It can be split into two broad categories based upon whether the input data is of a classical nature or quantum. In classical-input QML, the data is initially classical and must be converted into quantum states for processing by a quantum device. This approach promises potentially exponential increases in data handling capacity and processing speed, as well as feasible circuit size scaling. However, such advantages hinge on the fundamental question of whether classical data can be converted efficiently into quantum data~\cite{aaronson2015read,giovannetti2008quantum,soklakov2006efficient}. Alternatively, quantum-input QML assumes that the relevant data is already of a quantum nature, taking the perspective that the QML algorithm is an element in the toolchain of a composite quantum device. Although perhaps less studied at present, quantum-input QML holds great potential given the predicted development of quantum devices over the coming years.

Nevertheless, QML algorithms can typically function in both the classical and quantum input regimes. A variety of new approaches has already been discovered, including for anomaly detection~\cite{liu2017quantum}, data fitting~\cite{wiebe2012quantum}, and support vector machines~\cite{rebentrost2014quantum}. Recent work on quantum neural networks~\cite{schuld2014quest} 
include the use of amplitude amplification~\cite{wiebe2014quantum}, 
quantum annealing~\cite{amin2016quantum}, Helmholtz machines~\cite{benedetti2017quantum}, the alternating operator ansatz~\cite{verdon2017}, and parameterized 
unitaries \cite{farhi2018}.

Learning makes up an important stage of any QML algorithm. In this work, we detail a new controlled method of quantum state exponentiation (QSE) using batches of quantum data and show how this method can form the quantum analogue of Hebbian learning. Here, a quantum Hebbian learning (QHL) device is composed of a series of processing qubits and an ancilla learning qubit, with training data inputted from a register of data qubits. By repetitively performing partial swaps between the processing and data qubits, with control on the learning qubit, a QSE of a mixed state representing the Hebbian learning matrix can be enacted on the processing qubits. We begin by outlining our approach, called batched controlled QSE (bcQSE), in Sec.~\ref{Sec:qDME}. We next breakdown the controlled partial swap operation into standard one and two qubit gates from the Clifford+$T$ set in Sec.~\ref{Sec:GateCount}, allowing us to perform a gate count as well as an analysis of errors and efficiency of bcQSE. QHL and its realization through bcQSE is formalized in Sec.~\ref{Sec:qHeb}. Our findings represent important precursor steps in the development of a concrete QHL device. We then conclude in Sec.~\ref{sec:con} with a discussion.

\section{Quantum state exponentiation}\label{Sec:qDME}

\subsection{Established methods}

We first outline the established results on QSE. Given any (possibly unknown) quantum state $\rho$, one can transform another system according to the unitary $e^{- i t \rho}$ for some time $t$ using the protocol outlined in the following theorem.
\begin{theorem}\label{theoremDMExp} { \rm (QSE)}
Let $\rho$ be a density matrix of $N$ qubits. 
By partial swapping an $N$ qubit quantum system with a single copy of $\rho$, 
one can enact $e^{- i  \Delta t \rho}$ on the system to error $\Ord{\Delta t^2}$ in diamond norm. By repetitively partial swapping an $N$ qubit quantum system with a number of copies $n \geq 1$ of $\rho$ that scales as $n = \Ord{t^2/\epsilon}$, one can enact $e^{- i  t \rho}$ on the system to error $\epsilon>0$ in diamond norm. This protocol is optimal in $n$ in terms of the $t$ and $\epsilon$.
 \end{theorem}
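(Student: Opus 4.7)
The plan is to establish the single-step simulation first, then bootstrap it by composition, and finally address the optimality lower bound separately. For the single-step claim, I would let $S$ denote the swap operator on the two $N$-qubit registers and consider the partial-swap channel $\mathcal{E}_{\Delta t}(\sigma)=\mathrm{Tr}_2\!\left[e^{-i\Delta t\,S}(\sigma\otimes\rho)\,e^{i\Delta t\,S}\right]$. Expanding the conjugation to second order in $\Delta t$ and using the defining identity $\mathrm{Tr}_2[S(A\otimes B)]=BA$ together with $\mathrm{Tr}_2[(A\otimes B)S]=AB$, the first-order term reduces to $-i\Delta t\,[\rho,\sigma]$, which is exactly the generator of the unitary channel $\mathcal{U}_{\Delta t}(\sigma)=e^{-i\Delta t\rho}\sigma e^{i\Delta t\rho}$ expanded to the same order. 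The zeroth-order term is $\sigma$ on both sides, and the remainder is a trace-preserving map acting on bounded operators, so $\|\mathcal{E}_{\Delta t}-\mathcal{U}_{\Delta t}\|_\diamond=\Ord{\Delta t^2}$ follows from a term-by-term operator-norm bound on the Taylor remainder together with the fact that the partial trace is contractive in diamond norm.

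Next, to simulate $e^{-it\rho}$ I would set $\Delta t=t/n$ and apply the partial-swap channel $n$ times, each time against a fresh copy of $\rho$. By the triangle inequality in diamond norm and the sub-multiplicativity of diamond distance under channel composition, the total error satisfies
\begin{equation}
\bigl\|\mathcal{E}_{\Delta t}^{\,\circ n}-\mathcal{U}_{\Delta t}^{\,\circ n}\bigr\|_\diamond \;\leq\; n\,\bigl\|\mathcal{E}_{\Delta t}-\mathcal{U}_{\Delta t}\bigr\|_\diamond \;=\; \Ord{n\,\Delta t^2} \;=\; \Ord{t^2/n}.
\end{equation}
Demanding this be at most $\epsilon$ yields $n=\Ord{t^2/\epsilon}$, as claimed. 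The only subtlety here is that freshness of each copy of $\rho$ ensures the steps really do compose as independent channels, so the triangle inequality applies cleanly.

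For the optimality statement, I would argue by reduction: the copy complexity of density-matrix exponentiation is lower-bounded by the sample complexity of a suitable state-discrimination or phase-estimation task, since any procedure consuming $n$ copies of $\rho$ and producing $e^{-it\rho}$ to diamond error $\epsilon$ can, via phase estimation on eigenstates of $\rho$, be turned into an estimator of eigenvalues of $\rho$ to precision $\sim \epsilon/t$. Known sample-complexity lower bounds for this estimation task then force $n=\Omega(t^2/\epsilon)$, matching the upper bound up to constants.

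The main obstacle is the optimality half. The forward direction reduces to a controlled Taylor expansion plus one application of the triangle inequality, which is routine; by contrast, the tight lower bound $n=\Omega(t^2/\epsilon)$ requires a nontrivial information-theoretic argument (the state-discrimination reduction must be set up so that the $t$ and $\epsilon$ dependences are simultaneously tight), and in practice this is where the proof leans most heavily on prior technical work rather than on the partial-swap construction itself.
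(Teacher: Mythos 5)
First, a point of reference: the paper does not prove this theorem at all --- it states the result and defers entirely to the cited references (Lloyd et al., Kimmel et al., Marvian--Lloyd), so there is no in-paper proof to compare against. Your upper-bound argument is a correct reconstruction of the standard argument from those references: the identities $\mathrm{Tr}_2[S(A\otimes B)]=BA$ and $\mathrm{Tr}_2[(A\otimes B)S]=AB$ give the commutator $-i\Delta t\,[\rho,\sigma]$ at first order; the Taylor remainder is $\Ord{\Delta t^2}$ in the completely bounded sense because $\Vert S\Vert=1$ and partial trace is contractive; and the telescoping step $\Vert\mathcal{E}_{\Delta t}^{\,\circ n}-\mathcal{U}_{\Delta t}^{\,\circ n}\Vert_\diamond\leq n\,\Vert\mathcal{E}_{\Delta t}-\mathcal{U}_{\Delta t}\Vert_\diamond$ is the standard hybrid argument. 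That half is fine.

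The genuine gap is in the optimality claim, and it is more than just ``leaning on prior work'': the reduction you sketch does not close numerically. If an $n$-copy, $\epsilon$-accurate simulation of $e^{-it\rho}$ really yielded an eigenvalue estimator of precision $\sim\epsilon/t$, then the generic $\Omega(1/\delta^2)$ sample-complexity bound for $\delta$-precision parameter estimation would force $n=\Omega(t^2/\epsilon^2)$, which for $\epsilon<1$ exceeds the theorem's own upper bound $n=\Ord{t^2/\epsilon}$ --- a contradiction, so the premise of the reduction must be wrong. (A diamond-norm error $\epsilon$ perturbs output statistics by $\epsilon$; it does not translate into an $\epsilon/t$-precision phase estimate with constant confidence from a single run.) The actual lower bound in Kimmel et al.\ is a state-discrimination argument: one exhibits a pair of states $\rho_\pm$ at small trace distance $\theta$ whose exponentials $e^{-it\rho_\pm}$ are well separated in diamond norm, so that a too-frugal exponentiation protocol would let one distinguish $\rho_\pm$ with fewer copies than the Helstrom/Chernoff bound permits; obtaining the joint $(t,\epsilon)$ dependence requires tuning $\theta$ against both parameters. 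If you want to include the optimality statement, either carry out that discrimination argument or cite it explicitly rather than gesture at an eigenvalue-estimation reduction whose exponents do not match.
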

This theorem is proved in Refs.~\cite{lloyd2013quantum,kimmel2017hamiltonian,marvian2016}. Here, the partial swap operation is $e^{-i \Delta t \mathcal{S}}$ for some time $\Delta t = t/n$, with $\mathcal{S}$ the swap operation. A pure-state version of QSE was highlighted in the context of emulating unitaries in Ref.~\cite{marvian2016}.

Importantly, QSE can be extended to a controlled version dependent upon an ancillary control qubit~\cite{lloyd2013quantum,kimmel2017hamiltonian,marvian2016}.
\begin{theorem}\label{theoremDMExp2} { \rm (Controlled QSE)}
 Let $\rho$ be a density matrix over $N$ qubits.  
By partial swapping an $N$ qubit quantum system with a single copy of $\rho$, 
one can enact  $e^{- i t  \ket{1}\bra{1} \otimes \rho }$ on the system to error $\Ord{\Delta t^2}$ in diamond norm. By repetitively partial controlled swapping an $N$ qubit quantum system with a number of copies $n \geq 1$ of $\rho$ that scales as $n = \Ord{t^2/\epsilon}$, one can enact $e^{- i t  \ket{1}\bra{1} \otimes \rho }$ on the system to error $\epsilon$ in diamond norm. This protocol is optimal in $n$ in terms of the $t$ and $\epsilon$.
\end{theorem}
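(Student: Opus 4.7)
The plan is to reduce the controlled version to the uncontrolled Theorem~\ref{theoremDMExp} by a direct expansion in $\Delta t$. I would define the controlled partial swap as $V = e^{-i\Delta t\, \ket{1}\bra{1}\otimes \mathcal{S}_{SR}}$, where $S$ denotes the $N$-qubit target system, $R$ holds the fresh copy of $\rho$, and the control qubit $C$ is a third register. Because the control projector is a projector, $V$ splits cleanly as $\ket{0}\bra{0}\otimes I_{SR} + \ket{1}\bra{1}\otimes e^{-i\Delta t\,\mathcal{S}_{SR}}$, so conjugation by $V$ acts block-diagonally on the four control sectors of an arbitrary control-system state $\rho_{CS} = \sum_{ij}\ket{i}\bra{j}\otimes\sigma_{ij}$.

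Next I would compute the map $\rho_{CS}\mapsto \mathrm{tr}_R\bigl[V(\rho_{CS}\otimes \rho_R) V^\dagger\bigr]$ sector by sector. Using the standard identities $\mathrm{tr}_R[\mathcal{S}_{SR}(\sigma\otimes\rho)]=\rho\sigma$ and $\mathrm{tr}_R[(\sigma\otimes\rho)\mathcal{S}_{SR}]=\sigma\rho$, a first-order Taylor expansion gives $\sigma_{00}$ for the $(0,0)$ block, $\sigma_{01}+i\Delta t\,\sigma_{01}\rho+\Ord{\Delta t^2}$ for the $(0,1)$ block, its conjugate for $(1,0)$, and $\sigma_{11}-i\Delta t[\rho,\sigma_{11}]+\Ord{\Delta t^2}$ for $(1,1)$. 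This is exactly the first-order expansion of $e^{-i\Delta t\,\ket{1}\bra{1}\otimes\rho}\rho_{CS}e^{i\Delta t\,\ket{1}\bra{1}\otimes\rho}$ block-by-block, so the two channels agree to error $\Ord{\Delta t^2}$. Because this argument is uniform in the input $\rho_{CS}$ (including any extension to a reference system, since the $\rho_R$ copy is independent of the input), it upgrades to an $\Ord{\Delta t^2}$ bound in the diamond norm.

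For the second part, I would iterate this elementary step $n$ times with fresh copies of $\rho$ and set $\Delta t = t/n$. Subadditivity of error under composition of channels in diamond norm gives a total error of $\Ord{n \Delta t^2} = \Ord{t^2/n}$, so choosing $n = \Ord{t^2/\epsilon}$ suffices. Optimality in $n$ follows from Theorem~\ref{theoremDMExp}: any protocol for controlled QSE immediately yields an uncontrolled one by initializing the control in $\ket{1}$, so a better sample complexity than $\Omega(t^2/\epsilon)$ would contradict the known lower bound.

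The main obstacle I anticipate is not the first-order matching, which is essentially mechanical, but ensuring that the $\Ord{\Delta t^2}$ error on a single step lifts cleanly to a diamond-norm bound and then composes additively over $n$ steps. One has to be careful that the remainder terms are bounded in operator norm independently of the input state and of any entanglement with an external reference system; once that is established, the usual telescoping bound for concatenated quantum channels closes the argument.
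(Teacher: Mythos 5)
Your proposal is correct, and it follows essentially the standard argument: the paper itself gives no proof of this theorem, deferring to Refs.~\cite{lloyd2013quantum,kimmel2017hamiltonian,marvian2016}, where the proof proceeds exactly as you describe (block-diagonal splitting of the controlled partial swap over the control sectors, the partial-trace swap identities with a first-order Taylor expansion, subadditivity of diamond-norm error over $n$ compositions, and optimality by reduction to the uncontrolled case). Your attention to the diamond-norm subtlety --- that the remainder must be bounded uniformly over inputs entangled with a reference --- is the right point of care and is handled correctly since the swap identities extend linearly to any reference system.
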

See Refs. \cite{lloyd2013quantum,kimmel2017hamiltonian,marvian2016} for the proof. Here, $e^{-i \Delta t\ket{1}\bra{1} \otimes\mathcal{S}}$ is the partial controlled swap operation.

\subsection{Batching}\label{Sec:qDMEBatch}

We now specify our batched approach to QSE within the context of quantum data. Consider a batch of $M$ pieces of $N$ qubit quantum data, each represented by a $d=2^{N}$-dimensional pure quantum state $| x^{(m)} \rangle$. A statistical ensemble of this quantum data is given by the $N$ qubit mixed state
\begin{equation}\label{Eq:Ensemble}
\rho = \frac{1}{M}\sum_{m=1}^{M}| x^{(m)} \rangle\langle x^{(m)} |.
\end{equation}
We shall see in the following that $\rho$ is an important object, containing all the relevant information required to capture the Hebbian learning matrix. The following theorem shows that $e^{- i t \rho}$ can be enacted on a series of $N$ processing qubits by repetitively partial swapping multiple batches of quantum data. A batch partial swap constitutes $M$ partial swaps between the processing qubits and collections of data supplying qubits, each in the ordered quantum states $\{| x^{(m)} \rangle\}_{m=1}^{M}$. We focus on the situation where there is additional control on an ancilla qubit, which we refer to as the learning qubit due the link with QHL discussed in Sec.~\ref{Sec:qHeb}, and hence call the process batched controlled QSE (bcQSE).
\begin{theorem} { \rm (bcQSE)}\label{BatchedQSE}
Let $\{| x^{(m)} \rangle\}_{m=1}^{M}$ be a batch of $M$ quantum states, each quantum state being of $N$ qubits. By repetitively batch partial swapping an $N$ qubit quantum system with a number of batches $n\geq1$ of this quantum data that scales as $n= \Ord{t^2/\epsilon}$, one can enact $e^{- i t \ket{1}\bra{1} \otimes \rho}$ on the system to error $\epsilon$ in diamond norm.
\end{theorem}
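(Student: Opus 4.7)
The plan is to reduce the statement to Theorem~\ref{theoremDMExp2} (controlled QSE) via a Lie--Trotter decomposition across the $M$ states within a single batch. First I would define the elementary step: for some small time parameter $\tau>0$, perform sequentially, for $m=1,\ldots,M$, a partial controlled swap with time $\tau$ between the $N$-qubit processing register and a freshly supplied data register prepared in $|x^{(m)}\rangle$, with control on the learning qubit. Since each data register is used only once and then discarded, Theorem~\ref{theoremDMExp2} (applied with a single copy, $n=1$) certifies that the effective action on the processing and learning qubits of the $m$-th elementary step is $e^{-i\tau|1\rangle\langle 1|\otimes|x^{(m)}\rangle\langle x^{(m)}|}$ up to diamond-norm error $\Ord{\tau^2}$.

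Next I would invoke the first-order Lie--Trotter formula. Setting $H_m := |1\rangle\langle 1|\otimes|x^{(m)}\rangle\langle x^{(m)}|$, each $H_m$ satisfies $\|H_m\|\leq 1$, and
\be
\Bigl\|\prod_{m=1}^{M} e^{-i\tau H_m} - e^{-i\tau \sum_m H_m}\Bigr\| = \Ord{\tau^2 \sum_{m<m'}\|[H_m,H_{m'}]\|} = \Ord{M^2\tau^2},
\ee
in operator norm, hence in diamond norm (for unitaries $U,V$ one has $\|U-V\|_\diamond \leq 2\|U-V\|_\infty$). By~\eqref{Eq:Ensemble}, $\sum_m H_m = M\,|1\rangle\langle 1|\otimes\rho$, so one complete batch implements $e^{-i(M\tau)|1\rangle\langle 1|\otimes\rho}$ on the processing and learning qubits, with total per-batch diamond-norm error $\Ord{M\tau^2}+\Ord{M^2\tau^2}=\Ord{M^2\tau^2}$; the first contribution comes from summing $M$ individual controlled-QSE errors and the second, dominant, contribution from the Trotter approximation.

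Finally I would set $\Delta t := M\tau = t/n$, so that $n$ consecutive batches produce $e^{-it|1\rangle\langle 1|\otimes\rho}$. The per-batch error becomes $\Ord{\Delta t^2}$, and by subadditivity of the diamond norm for unitary channels the cumulative error over $n$ batches is $\Ord{n\Delta t^2}=\Ord{t^2/n}$. Demanding this to be at most $\epsilon$ yields the stated scaling $n=\Ord{t^2/\epsilon}$.

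The main obstacle I anticipate is the careful bookkeeping of the Lie--Trotter bound at the diamond-norm level when the learning qubit enters as a global control and fresh data qubits are consumed at each elementary step: one must verify that the $\Ord{M^2\tau^2}$ operator-norm bound on the unitary difference is not inflated by the presence of the ancillary data registers, and that subsequent batches compose correctly without the need to retain any coherence on the discarded data registers. Once these points are addressed, the rest is a straightforward reduction to Theorem~\ref{theoremDMExp2}; in the degenerate case $M=1$ the construction collapses to Theorem~\ref{theoremDMExp2} itself, which indicates that the claimed scaling cannot be improved in $n$.
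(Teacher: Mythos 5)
Your proposal is correct and follows essentially the same route as the paper: each elementary controlled partial swap is reduced to Theorem~\ref{theoremDMExp2} with per-step error $\Ord{\Delta t^2/M^2}$, and the Lie--Trotter (Lie product) formula supplies the $\Ord{t^2/n}$ approximation of $e^{-it\ket{1}\bra{1}\otimes\rho}$ by the product of the $nM$ single-state exponentials; the only cosmetic difference is that you apply Trotter batch-by-batch while the paper applies it once globally to $(\mathcal{U}_M\cdots\mathcal{U}_1)^n$, yielding the identical error budget $\Ord{t^2/(nM)}+\Ord{t^2/n}=\Ord{t^2/n}$.
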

\begin{proof}
We discuss the single step before obtaining multiple steps via repeating the single step for a single batch of $M$ states then subsequently for $n$ batches.  For a short time $\Delta t$, which later will be set to $\Delta t = t / n$, define the $M$ unitaries
\begin{equation}
\mathcal{U}_m := \ket 0 \bra 0 \otimes \mathbbm I + \ket 1 \bra 1 \otimes e^{-i \frac{ \Delta t}{M} \ket{ x^{(m)}} \bra {x^{(m)}} }.
\end{equation}
From Theorem~\ref{theoremDMExp2}, a transformation $\mathcal W_m$ on the processing qubits can be simulated that approximates $\mathcal{U}_m$ by partial swapping one copy of $\ket{ x^{(m)}}$ from data supplying qubits. The error is $\Ord{\Delta t^2/M^2}$  in diamond norm.

For a single batch, we perform $M$ partial swaps between processing and data supplying qubits, with the data supplying qubits sequentially prepared in the states $| x^{(1)} \rangle,| x^{(2)} \rangle,\ldots,| x^{(M)} \rangle$. This simulates the sequence of unitaries $\mathcal{U}_{M}\hdots \mathcal{U}_{1}$ approximately via the sequence $\mathcal{W}_{M}\hdots \mathcal{W}_{1}$. The error for simulating $\mathcal{U}_{M}\hdots \mathcal{U}_{1}$ compounds $M$ times the single step error and hence is $\Ord{\Delta t^2/M}$. 
A single batch takes $M$ controlled partial swaps (for a shorter time $\Delta t/M$) and $M$ qubit registers to encode all the pure training states.

Repeating the process for $n\geq 1$ batches simulates the sequence
$(\mathcal{U}_{M}\hdots \mathcal{U}_{1})^n$. The error compounds $n$ times to be $\Ord{n \Delta t^2 /M}$. Replacing $\Delta t = t/n$ leads to an error of $\Ord{t^2/nM }$. Our target is for the partial swap steps $(\mathcal{W}_{M}\hdots \mathcal{W}_{1})^n$ to approximate $e^{- i  t \ket{1}\bra{1} \otimes \rho}$.  
The error is 
 \begin{eqnarray}
\epsilon :&=& \left \Vert (\mathcal{W}_{M}\hdots \mathcal{W}_{1})^n - e^{- i  t \ket{1}\bra{1} \otimes \rho}\right \Vert_{\diamond} \nonumber \\
 &\leq& \left \Vert (\mathcal{W}_{M}\hdots \mathcal{W}_{1})^n -(\mathcal{U}_{M}\hdots \mathcal{U}_{1})^n \right \Vert_{\diamond} \nonumber \\
 &&+ \left \Vert (\mathcal{U}_{M}\hdots \mathcal{U}_{1})^n - e^{- i  t \ket{1}\bra{1} \otimes \rho}\right \Vert_{\diamond} \nonumber \\
 &=& \Ord{\frac{t^2}{nM}} + \Ord{\frac{t^2}{n}} = \Ord{\frac{t^2}{n}}.
\end{eqnarray}
Here we have used the triangle inequality and, in the last step, the Lie product formula  \cite{Childs2017} to approximate $e^{- i  t \ket{1}\bra{1} \otimes \rho}$ with $(\mathcal{U}_{M}\hdots \mathcal{U}_{1})^n$ with the error $\Ord{ \frac{ t^{2}}{n} }$.
The overall error compounds the two sources and is hence $\epsilon = \Ord{t^2/n }$.
 
\end{proof}

Figure~\ref{fig2} illustrates the bcQSE protocol. It requires $nM$ collections of $N$ data supplying qubits that are supplied in $n$ batches of the $M$ sequential data states. This approach is preferable when the user has access to multiple copies of the quantum data in pure state form, rather than the statistical ensemble.  
In the next section, we break down the controlled partial swap into standard one and two qubit gates from the Clifford+$T$ set~\cite{nielsen2002quantum}. Being the elementary transformation of both batched and non-batched QSE, this decomposition makes more concrete the implementation of QSE. Focusing on bcQSE, we then perform an analysis of errors and efficiency.

\begin{figure}
\begin{flushleft}
{\noindent \bf (a)}
\end{flushleft}
\vspace{-0.8cm}\hspace{0.5cm}
\begin{minipage}{0.5 \textwidth}
$$
\Qcircuit @C=1em @R=.1em {
& \lstick{\mbox{Learning qubit    }} & \ctrl{33} & \qw & \\
& & & & \\
& & & & \\
& & & & \\
& & & & \\
& & & & \\
& & & & \\
& & & & \\
& & & & \\
& & & & \\
& & & & \\
& & & & \\
& & & & \\
& & & & \\
& & & & \\
& & & & \\
& & & & \\
& & & & \\
& & & & \\
& & & & \\
& & & & \\
& & & & \\
& & & & \\
& & & & \\
& & & & \\
& & & & \\
& & & & \\
& & & & \\
& & & & \\
& & & & \\
& & & & \\
& & & & \\
& & & & \\
& & \multigate{7}{e^{-i t \rho}} & \qw & \\
& & \ghost{e^{-i t \rho}} & \qw & \\
& & \ghost{e^{-i t \rho}} & \qw & \\
& \lstick{\mbox{Processing    }} & \ghost{e^{-i t \rho}} & \qw & \\
& \lstick{\mbox{qubits    }} & \ghost{e^{-i t \rho}} & \qw & \\
& & \ghost{e^{-i t \rho}} & \qw & \\
& & \ghost{e^{-i t \rho}} & \qw & \\
& & \ghost{e^{-i t \rho}} & \qw \gategroup{33}{2}{41}{4}{1.2em}{..} & \\
}
$$
\end{minipage}
\vspace{-0.4cm}
\begin{flushleft}
{\noindent \bf (b)}
\end{flushleft}
\vspace{-0.6cm}
\begin{minipage}{0.05\textwidth}
\empty
\end{minipage}
\begin{minipage}{0.25\textwidth}
    $$
\Qcircuit @C=1em @R=.1em {
& \lstick{\mbox{Learning qubit}} & \ctrl{16} & \qw & \qw & \qw & \ctrl{16} & \qw & \qw & \qw & \ctrl{16} & \qw & \qw & \qw & \ctrl{16} & \qw &  \\
& & & & & & & & & & & & & & & & \\
& & & & & & & & & & & & & & & & \\
& & & & & & & & & & & & & & & & \\
& & & & & & & & & & & & & & & & \\
& & & & & & & & & & & & & & & & \\
& & & & & & & & & & & & & & & & \\
& & & & & & & & & & & & & & & & \\
& & & & & & & & & & & & & & & & \\
& & & & & & & & & & & & & & & & \\
& & & & & & & & & & & & & & & & \\
& & & & & & & & & & & & & & & & \\
& & & & & & & & & & & & & & & & \\
& & & & & & & & & & & & & & & & \\
& & & & & & & & & & & & & & & & \\
& & & & & & & & & & & & & & & & \\
& & \multigate{7}{ } & \qw & \ldots &  & \multigate{7}{ } & \qw & \ldots & & \multigate{7}{ } & \qw & \ldots &  & \multigate{7}{ } & \qw & \\
& & \ghost{ } & \qw & \ldots &  & \ghost{ } & \qw & \ldots & & \ghost{ } & \qw & \ldots &  & \ghost{ } & \qw & \\
& & \ghost{ } & \qw & \ldots &  & \ghost{ } & \qw & \ldots & & \ghost{ } & \qw & \ldots &  & \ghost{ } & \qw & \\
& \lstick{\mbox{Processing  }} & \ghost{ } & \qw & \ldots &  & \ghost{ } & \qw & \ldots & & \ghost{ } & \qw & \ldots &  & \ghost{ } & \qw & \\
& \lstick{\mbox{qubits  }} & \ghost{ } & \qw & \ldots &  & \ghost{ } & \qw & \ldots & & \ghost{ } & \qw & \ldots &  & \ghost{ } & \qw & \\
& & \ghost{ } & \qw & \ldots &  & \ghost{ } & \qw & \ldots & & \ghost{ } & \qw & \ldots &  & \ghost{ } & \qw & \\
& & \ghost{ } & \qw & \ldots &  & \ghost{ } & \qw & \ldots & & \ghost{ } & \qw & \ldots &  & \ghost{ } & \qw & \\
& & \ghost{ } \qwx[16] & \qw & \ldots &  & \ghost{ } \qwx[39] & \qw & \ldots & & \ghost{ }\qwx[62] & \qw & \ldots &  & \ghost{ }\qwx[85] & \qw \gategroup{17}{2}{24}{16}{1.2em}{..} & \\
& & & & & & & & & & & & & & & & \\
& & & & & & & & & & & & & & & & \\
& & & & & & & & & & & & & & & & \\
& & & & & & & & & & & & & & & & \\
& & & & & & & & & & & & & & & & \\
& & & & & & & & & & & & & & & & \\
& & & & & & & & & & & & & & & & \\
& & & & & & & & & & & & & & & & \\
& & & & & & & & & & & & & & & & \\
& & & & & & & & & & & & & & & & \\
& & & & & & & & & & & & & & & & \\
& & & & & & & & & & & & & & & & \\
& & & & & & & & & & & & & & & & \\
& & & & & & & & & & & & & & & & \\
& & & & & & & & & & & & & & & & \\
& & \multigate{7}{ } & \qw & \qw & \qw & \qw & \qw &  &  &  &  &  &  &  &  & \\
& & \ghost{ } & \qw & \qw & \qw & \qw & \qw &  &  &  &  &  &  &  &  & \\
& & \ghost{ } & \qw & \qw & \qw & \qw & \qw &  &  &  &  &  &  &  &  & \\
& \lstick{| x^{(1)} \rangle\,\,} & \ghost{ } & \qw & \qw & \qw & \qw & \qw &  &  &  &  &  &  &  &  & \\
& & \ghost{ } & \qw & \qw & \qw & \qw & \qw &  &  &  &  &  &  &  &  & \\
& & \ghost{ } & \qw & \qw & \qw & \qw & \qw &  &  &  &  &  &  &  &  & \\
& & \ghost{ } & \qw & \qw & \qw & \qw & \qw &  &  &  &  &  &  &  &  & \\
& & \ghost{ } & \qw & \qw & \qw & \qw & \qw &  &  &  &  &  &  &  &  & \\
& & & & & & & & & & & & & & & & \\
& & & & & & & & & & & & & & & & \\
& & & & & & & & & & & & & & & & \\
& & & & & & & & & & & & & & & & \\
& & & & & & & & & & & & & & & & \\
& & & & & & & & & & & & & & & & \\
& & & & \ddots & & & & & & & & & & & & \\
& & & & & & & & & & & & & & & & \\
& & & & & & & & & & & & & & & & \\
& & & & & & & & & & & & & & & & \\
& & & & & & & & & & & & & & & & \\
& & & & & & & & & & & & & & & & \\
& & & & & & & & & & & & & & & & \\
& & & & & & & & & & & & & & & & \\
& & & & & & & & & & & & & & & & \\
& & \qw & \qw & \qw & \qw & \multigate{7}{ } & \qw &  &  &  &  &  &  &  &  & \\
& & \qw & \qw & \qw & \qw & \ghost{ } & \qw &  &  &  &  &  &  &  &  & \\
& & \qw & \qw & \qw & \qw & \ghost{ } & \qw &  &  &  &  &  &  &  &  & \\
& \lstick{| x^{(M)} \rangle\,\,} & \qw & \qw & \qw & \qw & \ghost{ } & \qw &  &  &  &  &  &  &  &  & \\
& & \qw & \qw & \qw & \qw & \ghost{ } & \qw &  &  &  &  &  &  &  &  & \\
& & \qw & \qw & \qw & \qw & \ghost{ } & \qw &  &  &  &  &  &  &  &  & \\
& & \qw & \qw & \qw & \qw & \ghost{ } & \qw &  &  &  &  &  &  &  &  & \\
& & \qw & \qw & \qw & \qw & \ghost{ } \gategroup{40}{2}{70}{8}{1.2em}{..} & \qw &  &  &  &  &  &  &  &  & \\
& & & & & & & & & & & & & & & & \\
& & & & & & & & & & & & & & & & \\
& & & & & & & & & & & & & & & & \\
& & & & & & & & & & & & & & & & \\
& & & & & & & & & & & & & & & & \\
& & & & & & & & & & & & & & & & \\
& & & & \mbox{First batch} & &  & & \ddots & & & & & & & & \\
& & & & & & & & & & & & & & & & \\
& & & & & & & & & & & & & & & & \\
& & & & & & & & & & & & & & & & \\
& & & & & & & & & & & & & & & & \\
& & & & & & & & & & & & & & & & \\
& & & & & & & & & & & & & & & & \\
& & & & & & & & & & & & & & & & \\
& & & & & & & & & & & & & & & & \\
& & & & & & & &  &  & \multigate{7}{ } & \qw & \qw & \qw & \qw & \qw & \\
& & & & & & & &  &  & \ghost{ } & \qw & \qw & \qw & \qw & \qw & \\
& & & & & & & &  &  & \ghost{ } & \qw & \qw & \qw & \qw & \qw & \\
& & & & & & & &  & \lstick{| x^{(1)} \rangle\,\,}  & \ghost{ } & \qw & \qw & \qw & \qw & \qw & \\
& & & & & & & &  &  & \ghost{ } & \qw & \qw & \qw & \qw & \qw & \\
& & & & & & & &  &  & \ghost{ } & \qw & \qw & \qw & \qw & \qw & \\
& & & & & & & &  &  & \ghost{ } & \qw & \qw & \qw & \qw & \qw & \\
& & & & & & & &  &  & \ghost{ } & \qw & \qw & \qw & \qw & \qw & \\
& & & & & & & & & & & & & & & & \\
& & & & & & & & & & & & & & & & \\
& & & & & & & & & & & & & & & & \\
& & & & & & & & & & & & & & & & \\
& & & & & & & & & & & & & & & & \\
& & & & & & & & & & & & & & & & \\
& & & &  & &  & &   & & & &  \ddots & & & & \\
& & & & & & & & & & & & & & & & \\
& & & & & & & & & & & & & & & & \\
& & & & & & & & & & & & & & & & \\
& & & & & & & & & & & & & & & & \\
& & & & & & & & & & & & & & & & \\
& & & & & & & & & & & & & & & & \\
& & & & & & & & & & & & & & & & \\
& & & & & & & & & & & & & & & & \\
& & & & & & & &  &  & \qw & \qw & \qw & \qw & \multigate{7}{ } & \qw & \\
& & & & & & & &  &  & \qw & \qw & \qw & \qw & \ghost{ } & \qw & \\
& & & & & & & &  &  & \qw & \qw & \qw & \qw & \ghost{ } & \qw & \\
& & & & & & & &  & \lstick{| x^{(M)} \rangle\,\,} & \qw & \qw & \qw & \qw & \ghost{ } & \qw & \\
& & & & & & & &  &  & \qw & \qw & \qw & \qw & \ghost{ } & \qw & \\
& & & & & & & &  &  & \qw & \qw & \qw & \qw & \ghost{ } & \qw & \\
& & & & & & & &  &  & \qw & \qw & \qw & \qw & \ghost{ } & \qw & \\
& & & & & & & &  &  & \qw & \qw & \qw & \qw & \ghost{ } \gategroup{86}{10}{116}{16}{1.2em}{..} & \qw & \\
& & & & & & & & & & & & & & & & \\
& & & & & & & & & & & & & & & & \\
& & & & & & & & & & & & & & & & \\
& & & & & & & & & & & & & & & & \\
& & & & & & & & & & & & & & & & \\
& & & & & & & & & & & & & & & & \\
& & & &  & &  & &  & & & & \mbox{$n$-th batch} & & & & \\
}
$$
\end{minipage}
\begin{minipage}{0.25\textwidth}
\begin{flushleft}
\vspace{-5cm}
\hspace{-1.6cm}\includegraphics[width=0.6\textwidth]{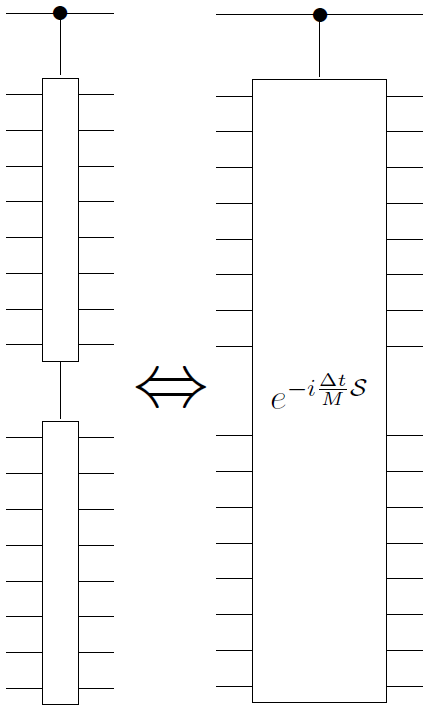}
\\ \hspace{-1.0cm} { \bf Key}
\end{flushleft}
\end{minipage}
\vspace{-0.2cm}
    \caption{(a) Quantum state exponentiation (QSE) of a state $\rho$ onto a series of processing qubits, with control from an ancilla learning qubit. (b) Batched controlled quantum state exponentiation (bcQSE) of a batch of quantum states.
    Quantum data is organized into $n$ batches of $M$ pure states. The first collection of $N$ qubits is prepared in $|x^{(1)} \rangle$, then the next collection in $|x^{(2)} \rangle$, and so forth until the $M$-th collection which is in $|x^{(M)} \rangle$. This represents the first batch, and the process is repeated $n$ times. 
}
    \label{fig2}
\end{figure}

\section{Gate decomposition}\label{Sec:GateCount}

The Clifford set of gates consists of the Hadamard gate $H$, the phase gate $S$, the global phase gate $W$, and the CNOT gate. Any quantum circuit consisting of only these gates can be simulated efficiently on a classical computer~\cite{nielsen2002quantum}. The conventional approach to extending beyond classical simulability is to add the $\pi/8$ gate $T$. The matrix representations of these gates with respect to the computational basis are:
\begin{eqnarray}
& & H = \frac{1}{\sqrt{2}} \left(\begin{array}{cc}
1 & 1 \\
1 & -1
\end{array} \right),
\qquad
S = \left(\begin{array}{cc}
1 & 0 \\
0 & i
\end{array} \right),
\nonumber \\
& & T = \left(\begin{array}{cc}
1 & 0 \\
0 & e^{i \frac{\pi}{4}}
\end{array} \right),
\qquad
W = \left(\begin{array}{cc}
e^{i \frac{\pi}{4}} & 0 \\
0 & e^{i \frac{\pi}{4}}
\end{array} \right),
\nonumber \\
& & CNOT = \left(\begin{array}{cccc}
1 & 0 & 0 & 0 \\
0 & 1 & 0 & 0 \\
0 & 0 & 0 & 1 \\
0 & 0 & 1 & 0 \\
\end{array}\right). \nonumber
\end{eqnarray}
The ability to perform Clifford+$T$ gates is established for a variety of physical implementations of quantum circuits, such as those using superconducting qubits, trapped ions or nuclear magnetic resonance. 

\begin{figure*}
\hspace{-5.5cm}
\begin{minipage}{.43\textwidth}
{\hspace{0.4cm}\bf (a)}
\vspace*{-0.3cm}
$$
\Qcircuit @C=1em @R=.1em {
& \lstick{\mbox{Learning qubit    }} & \ctrl{33} & \qw \\
& & & \\
& & & \\
& & & \\
& & & \\
& & & \\
& & & \\
& & & \\
& & & \\
& & & \\
& & & \\
& & & \\
& & & \\
& & & \\
& & & \\
& & & \\
& & & \\
& & & \\
& & & \\
& & & \\
& & & \\
& & & \\
& & & \\
& & & \\
& & & \\
& & & \\
& & & \\
& & & \\
& & & \\
& & & \\
& & & \\
& & & \\
& & & \\
& \lstick{a_{1}} & \multigate{47}{e^{-i \frac{\Delta t}{M} \mathcal{S}}} & \qw \\
& \lstick{a_{2}} & \ghost{e^{-i \frac{\Delta t}{M} \mathcal{S}}} & \qw \\
& & \ghost{e^{-i \frac{\Delta t}{M} \mathcal{S}}} & \qw \\
& & \ghost{e^{-i \frac{\Delta t}{M} \mathcal{S}}} & \qw \\
& \lstick{\vdots} & \ghost{e^{-i \frac{\Delta t}{M} \mathcal{S}}} & \qw \\
& & \ghost{e^{-i \frac{\Delta t}{M} \mathcal{S}}} & \qw \\
& & \ghost{e^{-i \frac{\Delta t}{M} \mathcal{S}}} & \qw \\
& \lstick{a_{N}} & \ghost{e^{-i \frac{\Delta t}{M} \mathcal{S}}} & \qw \\
& & & \\
& & & \\
& & & \\
& & & \\
& & & \\
& & & \\
& & & \\
& & & \\
& & & \\
& & & \\
& & & \\
& & & \\
& & & \\
& & & \\
& & & \\
& & & \\
& & & \\
& & & \\
& & & \\
& & & \\
& & & \\
& & & \\
& & & \\
& & & \\
& & & \\
& & & \\
& & & \\
& & & \\
& & & \\
& & & \\
& & & \\
& & & \\
& \lstick{b_{1}} & \ghost{e^{-i \frac{\Delta t}{M} \mathcal{S}}} & \qw \\
& \lstick{b_{2}} & \ghost{e^{-i \frac{\Delta t}{M} \mathcal{S}}} & \qw \\
& & \ghost{e^{-i \frac{\Delta t}{M} \mathcal{S}}} & \qw \\
& & \ghost{e^{-i \frac{\Delta t}{M} \mathcal{S}}} & \qw \\
& \lstick{\vdots} & \ghost{e^{-i \frac{\Delta t}{M} \mathcal{S}}} & \qw \\
& & \ghost{e^{-i \frac{\Delta t}{M} \mathcal{S}}} & \qw \\
& & \ghost{e^{-i \frac{\Delta t}{M} \mathcal{S}}} & \qw \\
& \lstick{b_{N}} & \ghost{e^{-i \frac{\Delta t}{M} \mathcal{S}}} & \qw \\
}
$$
\end{minipage}
\begin{minipage}{.32\textwidth}
\vspace{-0.57cm}
{\hspace{5.0cm} \bf (b)}
\vspace{-0.3cm}
$$
\Qcircuit @C=1em @R=1em {
& \lstick{\mbox{Learning qubit}}& \qw & \qw & \qw & \qw & \qw & \qw & \ctrl{12} & \qw & \qw & \qw & \qw & \qw & \qw & \qw \\
& & & & & & & & & & & & & & & \\
& & & & & & & & & & & & & & & \\
& \lstick{a_{1}} & \targ & \ctrl{1} & \ctrl{1} & \qw & \qw & \qw & \qw & \qw & \qw & \qw & \ctrl{1} & \ctrl{1} & \targ & \qw  \\
& \lstick{b_{1}}  & \ctrl{-1} & \gate{e^{-i \frac{\pi}{4} Y}} & \ctrl{8} & \qw & \qw & \qw & \qw & \qw & \qw & \qw & \ctrl{8} & \gate{e^{i \frac{\pi}{4} Y}} & \ctrl{-1} & \qw \\
& \lstick{a_{2}}  & \targ & \ctrl{1} & \qw & \ctrl{1} & \qw & \qw & \qw & \qw & \qw & \ctrl{1} & \qw & \ctrl{1} & \targ & \qw \\
& \lstick{b_{2}}  & \ctrl{-1} & \gate{e^{-i \frac{\pi}{4} Y}} & \qw & \ctrl{6} & \qw & \qw & \qw & \qw & \qw & \ctrl{6} & \qw & \gate{e^{i \frac{\pi}{4} Y}} & \ctrl{-1} & \qw \\
& & \vdots & \vdots & & & \ddots & & & & \iddots & & & \vdots & \vdots \\ \\
& \lstick{a_{N}} & \targ & \ctrl{1} & \qw & \qw & \qw & \ctrl{1} & \qw & \ctrl{1} & \qw & \qw & \qw & \ctrl{1} & \targ & \qw \\
& \lstick{b_{N}} & \ctrl{-1} & \gate{e^{-i \frac{\pi}{4} Y}} & \qw & \qw & \qw & \ctrl{2} & \qw & \ctrl{2} & \qw & \qw & \qw & \gate{e^{i \frac{\pi}{4} Y}} & \ctrl{-1} & \qw \\
& & & & & & & & & & & & & & & \\
& \lstick{\ket{0}} & \qw & \qw & \targ & \targ & \qw & \targ & \gate{e^{- i \frac{\Delta t}{M} Z}} & \targ & \qw & \targ & \targ & \qw & \qw & \qw \\
}
$$
\end{minipage}
\caption{The controlled partial swap operation (a) between the $N$ qubit collection $a_{1},a_{2},\ldots,a_{N}$ and the $N$ qubit collection $b_{1},b_{2},\ldots,b_{N}$, with learning qubit control, can be decomposed into two and three qubit quantum gates according to (b), following steps from Ref.~\cite{childs2003exponential}. Here, an additional ancilla qubit initialized in the state $\ket{0}$ is utilized to perform the transformation. The symbols used follow standard conventions~\cite{nielsen2002quantum}.    
}
    \label{fig3}
\end{figure*}
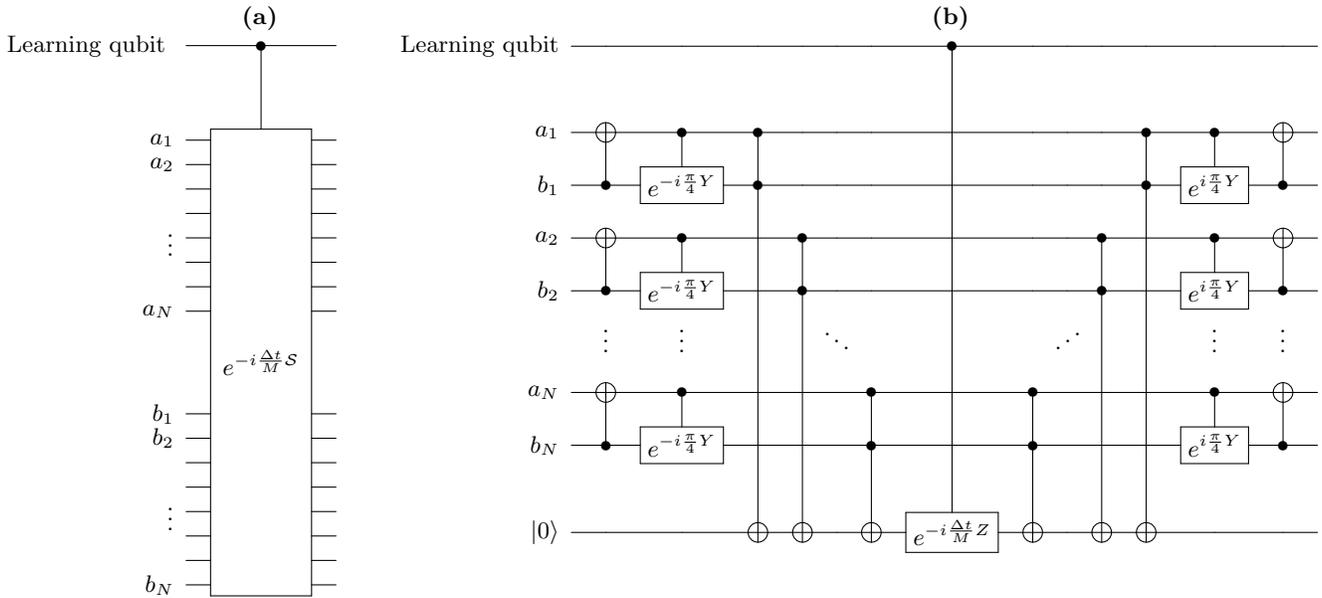

The controlled partial swap operation used in Fig.~\ref{fig2} between $N$ processing qubits and $N$ data supplying qubits for a time $\frac{\Delta t}{M}$, with control on the learning qubit, can be first decomposed into two and three qubit gates following similar steps to Ref.~\cite{childs2003exponential}. The result is presented in Fig.~\ref{fig3}. Here, we label the $N$ processing qubits as $a_{1},a_{2},\ldots,a_{N}$ and the $N$ data supplying qubits as $b_{1},b_{2},\ldots,b_{N}$ (although the choice of labeling is symmetric). The next step is to further decompose these two and three qubit gates into gates from the Clifford+$T$ set. In Fig.~\ref{fig4}, we show such decompositions for (a) the controlled $e^{-i \frac{\pi}{4}Y}$ unitary, (b) the Toffoli gate, and (c) the time-dependent controlled $e^{- i \frac{\Delta t}{M} Z}$ unitary, where $Y$ and $Z$ are qubit Pauli gates. In Fig.~\ref{fig4}~(c), we show the controlled $e^{- i \frac{\Delta t}{M} Z}$ as a composition of two CNOT gates and the two single qubit unitaries $e^{\pm i \frac{\Delta t}{2M} Z}$, which can be approximately decomposed into Clifford+$T$ gates using a technique given in Ref.~\cite{ross2014optimal}, see the following for further details.

\begin{figure*}
\begin{flushleft}
\noindent {\bf (a)}
\hspace{1cm}
\begin{minipage}{0.09\textwidth}
\begin{center}
$\Qcircuit @C=1em @R=1.4em {
& \ctrl{1} & \qw \\
& \gate{e^{-i \frac{\pi}{4} Y }} & \qw
}$
\end{center}
\end{minipage}
\begin{minipage}{0.06\textwidth}
\begin{center}
\vspace{0.1cm}
$\Leftrightarrow$
\end{center}
\end{minipage}
\begin{minipage}{0.7\textwidth}
\begin{flushleft}
\vspace{0.25cm}
$\Qcircuit @C=1em @R=1.4em {
& \qw & \qw & \qw & \qw & \qw & \ctrl{1} & \qw & \qw & \qw & \qw & \qw & \ctrl{1} & \qw  \\
& \gate{S} & \gate{H} & \gate{T} & \gate{H} & \gate{S^{\dagger}} & \targ & \gate{S} & \gate{H} & \gate{T^{\dagger}} & \gate{H} & \gate{S^{\dagger}} &  \targ & \qw
}$
\end{flushleft}
\end{minipage}
\vspace{1cm}
\\ \noindent {\bf (b)}
\hspace{1cm}
\begin{minipage}{0.09\textwidth}
\begin{center}
$\Qcircuit @C=1em @R=1.7em {
& \ctrl{2} & \qw \\
& \ctrl{1} & \qw \\
& \targ & \qw
}$
\end{center}
\end{minipage}
\begin{minipage}{0.06\textwidth}
\begin{center}
$\Leftrightarrow$
\end{center}
\end{minipage}
\begin{minipage}{0.7\textwidth}
\vspace{-.8mm}\hspace{-2.6cm}
$\Qcircuit @C=1em @R=.69em {
& \qw & \qw & \qw & \ctrl{2} & \qw & \qw & \qw & \ctrl{2} & \qw & \ctrl{1} & \qw & \ctrl{1} & \gate{T} & \qw \\
& \qw & \ctrl{1} & \qw & \qw & \qw & \ctrl{1} & \qw & \qw & \gate{T^{\dagger}} & \targ & \gate{T^{\dagger}} & \targ & \gate{S} & \qw \\
& \gate{H} & \targ & \gate{T^{\dagger}} & \targ & \gate{T} & \targ & \gate{T^{\dagger}} & \targ & \gate{T} & \gate{H} & \qw & \qw & \qw & \qw
}$
\end{minipage}
\vspace{1cm}
\\ \noindent {\bf (c)}
\hspace{0.9cm}
\begin{minipage}{0.10\textwidth}
\begin{center}
\vspace{-0.2cm}
$\Qcircuit @C=1em @R=1.4em {
& \ctrl{1} & \qw \\
& \gate{e^{- i \frac{\Delta t}{M} Z }} & \qw
}$
\end{center}
\end{minipage}
\begin{minipage}{0.06\textwidth}
\begin{center}
$\Leftrightarrow$
\end{center}
\end{minipage}
\begin{minipage}{0.6\textwidth}
\hspace{-6.1cm}
$\Qcircuit @C=1em @R=1.4em {
& \qw & \ctrl{1} & \qw & \ctrl{1} & \qw  \\
& \gate{e^{ - i \frac{\Delta t}{2M} Z}} & \targ & \gate{e^{ i \frac{\Delta t}{2M} Z}} &  \targ & \qw
}$
\end{minipage}
\hspace{1cm}

\end{flushleft}
    \caption{A decomposition of the two and three qubit gates in Fig.~\ref{fig3}~(b) into standard one and two qubit gates from the Clifford+$T$ set for (a) the controlled unitary $e^{-i \frac{\pi}{4}Y}$ and (b) the Toffoli gate, as given in~\cite{nielsen2002quantum}. The inverse gates $T^{\dagger}$ and $S^{\dagger}$ are assumed to be part of the Clifford+$T$ set and that $e^{i \frac{\pi}{4}Y}$ can be realized by simply swapping the positions of $T$ and $T^{\dagger}$ in (a). 
In (c), the time-dependent controlled unitary $e^{- i \frac{\Delta t}{M}Z}$ is decomposed into two CNOT gates and two single qubit unitaries $e^{\pm i \frac{\Delta t}{2M}Z}$. We discuss in the main text how these unitaries can be approximately decomposed into Clifford+$T$ gates~\cite{ross2014optimal}.}
    \label{fig4}
\end{figure*}
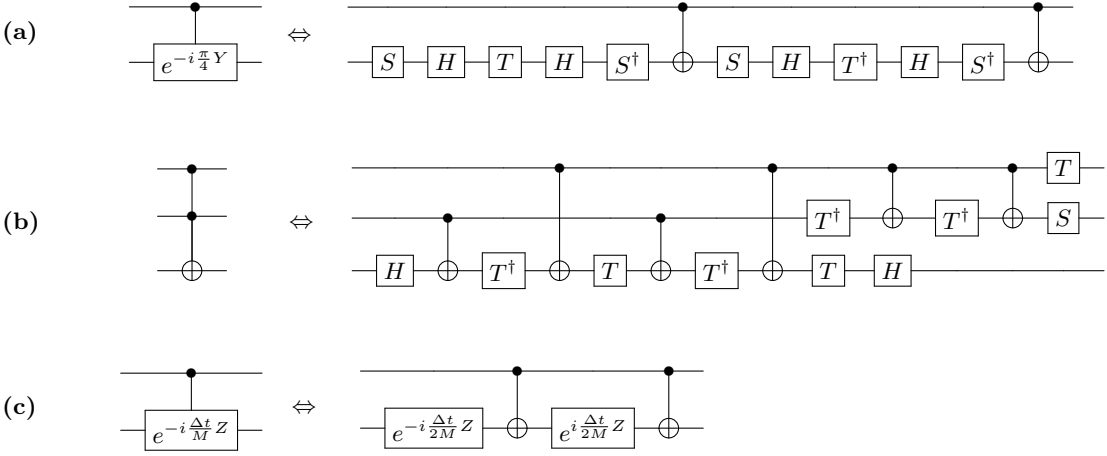

We now count the total number of Clifford+$T$ gates required to perform bcQSE. The gate count number is recorded in the vector $(H,S,W,CNOT,T)$. Herein, we assume that the inverse gates $T^{\dagger}$ and $S^{\dagger}$ have the same overhead as the corresponding $T$ and $S$ gates. Let us first consider the time-dependent qubit unitaries $e^{\pm i \frac{\Delta t}{2M}Z}$, which can be reduced using the following result~\cite{ross2014optimal}.
\begin{result}\label{Result1}
A unitary $e^{- i \tau Z}$ for any time $\tau$ can be approximated to error $\eta$ in operator norm using a probabilistically generated sequence of $H$, $S$, $W$, and $T$ gates with an approximate gate count number $(3 g_{\eta}, 2 g_{\eta}, g, 0, 3 g_{\eta})$, where $g_{\eta} :=\tOrd{ \log_{2} (1/\eta) }$ and $g \approx 10$.
\end{result}
This result can be confirmed numerically using an algorithm developed by Ross and Selinger~\cite{ross2014optimal}, who also showed rigorously that the number of required $T$ gates scales as $\tOrd{\log\left(\frac{1}{\eta}\right) }$, with $\tOrd{\cdot}$ indicating the presence of more slowly growing terms in $\eta$. 
Note that one can typically have a lower gate count requirement by selecting the best output from multiple runs of the algorithm. Hence, the two $e^{\pm i \frac{\Delta t}{2M}Z}$ gates can be performed to a combined error $2\eta$ using approximately $(6 g_{\eta}, 4 g_{\eta}, 2g, 0, 6 g_{\eta})$ gates.

This result can be combined with the decomposition in Figs.~\ref{fig3} and~\ref{fig4} to arrive at the following result.
\begin{theorem}\label{lemCtrlInfSwap} { \rm (Controlled partial swap gate count)}
Consider two $N$ qubit systems along with a single control qubit as well as an ancillary qubit prepared in $\ket{0}$. Assuming error-free Clifford+$T$ gates, there exists a quantum circuit performing the controlled partial swap operation $e^{- i \frac{\Delta t}{M} \mathcal{S}}$ to error $2 \eta$ in operator norm using a gate count of
\begin{equation} \label{eqGateCount}
\vec{g}(\eta) = (12N + 6 g_{\eta}, 10N + 4g_{\eta}, 2g, 18N + 2, 18N + 6g_{\eta}).
\end{equation}
\end{theorem}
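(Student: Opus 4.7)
The plan is to combine the circuit identities in Figures \ref{fig3} and \ref{fig4} so that every gate in the controlled partial swap becomes a Clifford$+T$ gate, and then to sum the counts blockwise while tracking the approximation error from the $Z$-rotations. Since the circuit in Fig.~\ref{fig3}(b) is exactly the controlled version of the partial swap, and the Toffoli and controlled-$e^{-i\pi/4 Y}$ decompositions in Figs.~\ref{fig4}(a) and (b) are \emph{exact} Clifford$+T$ identities, the only source of approximation error is the central block implementing $e^{-i\frac{\Delta t}{M} Z}$ on the ancilla.

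First I would read off from Fig.~\ref{fig3}(b) the per-index building blocks: for each $i=1,\dots,N$ the circuit contains one CNOT between $a_i$ and $b_i$, one controlled $e^{-i\pi/4 Y}$, and one Toffoli targeting the $\ket{0}$ ancilla on the forward sweep, and the mirror-image gates (CNOT, controlled $e^{+i\pi/4 Y}$, and a second Toffoli) on the reverse sweep. Then I would tabulate:
\begin{itemize}
\item Each CNOT contributes $(0,0,0,1,0)$; there are $2N$ of them, giving $(0,0,0,2N,0)$.
\item Each controlled $e^{\pm i\pi/4 Y}$, using Fig.~\ref{fig4}(a), contributes $(4,4,0,2,2)$ (with $S^\dagger,T^\dagger$ counted under $S,T$); there are $2N$ of them, giving $(8N,8N,0,4N,4N)$.
\item Each Toffoli, using the standard decomposition in Fig.~\ref{fig4}(b), contributes $(2,1,0,6,7)$; there are $2N$ of them, giving $(4N,2N,0,12N,14N)$.
\end{itemize}
The central controlled $e^{-i\frac{\Delta t}{M} Z}$, via Fig.~\ref{fig4}(c), becomes $2$ CNOTs and two single-qubit $e^{\pm i\frac{\Delta t}{2M} Z}$ rotations. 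Applying Result~\ref{Result1} to each rotation at error $\eta$ contributes $(6g_\eta,4g_\eta,2g,0,6g_\eta)$, and together with the $2$ extra CNOTs the central block costs $(6g_\eta,4g_\eta,2g,2,6g_\eta)$. Adding all five contributions componentwise reproduces \eqref{eqGateCount}.

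For the error bound I would use that the operator norm is sub-multiplicative and unitarily invariant, so if $V_1,V_2$ are the Ross--Selinger approximations to the unitaries $U_1=e^{-i\frac{\Delta t}{2M} Z}$ and $U_2=e^{+i\frac{\Delta t}{2M} Z}$ sandwiching the two CNOTs, then $\|U_2 C U_1 - V_2 C V_1\| \leq \|U_1-V_1\|+\|U_2-V_2\|\leq 2\eta$. Since the surrounding circuit of CNOTs, Toffolis, and controlled $Y$-rotations is implemented exactly, conjugating the central block by this exact unitary preserves the $2\eta$ bound, which is the claimed accuracy.

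The main obstacle is really just bookkeeping: verifying that the Toffoli and controlled-$Y$-rotation decompositions of Fig.~\ref{fig4}(a)--(b) give the stated per-gate counts with no hidden overhead (in particular that $S^\dagger$ and $T^\dagger$ can legitimately be counted as $S$ and $T$, as the paper already assumes), and that no other gate in Fig.~\ref{fig3}(b) has been overlooked. Once the per-block counts are confirmed, the summation is straightforward and the error bound follows from a one-line triangle-inequality argument.
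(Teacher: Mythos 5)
Your proposal is correct and follows essentially the same route as the paper's proof: the same blockwise tally of $2N$ CNOTs, $2N$ controlled $e^{\pm i\frac{\pi}{4}Y}$ gates at $(4,4,0,2,2)$ each, $2N$ Toffolis at $(2,1,0,6,7)$ each, and one central controlled $e^{-i\frac{\Delta t}{M}Z}$ block at $(6g_{\eta},4g_{\eta},2g,2,6g_{\eta})$, with the $2\eta$ error attributed solely to the two Ross--Selinger $Z$-rotation approximations. Your explicit triangle-inequality justification of the error bound is a slightly more careful rendering of what the paper states in one line, but it is not a different argument.
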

\begin{proof}
The controlled partial swap in Fig.~\ref{fig3}~(a) is broken down into Clifford+$T$ gates in Fig.~\ref{fig3}~(b), which is adapted from Ref.~\cite{childs2003exponential}, and Fig.~\ref{fig4}.
The controlled unitary $e^{-i \frac{\pi}{4}Y}$ has a gate count of $(4,4,0,2,2)$ and is applied $2N$ times in Fig.~\ref{fig3}~(b) ($N$ of those times being the conjugate gate, with the same gate count). The Toffoli gate has a gate count of $(2,1,0,6,7)$ and is also applied $2N$ times. Next, the controlled $e^{- i \frac{\Delta t}{M}Z}$ unitary has a gate count of $(6g_{\eta},4g_{\eta},2g,2,6 g_{\eta})$ using Result~\ref{Result1} and is applied once. Finally, there are $2N$ additional CNOT gates in Fig.~\ref{fig3}~(b). Combining all of these elements results in a gate count of $\vec{g}(\eta) = (12N + 6 g_{\eta}, 10N + 4g_{\eta}, 2g, 18N + 2, 18N + 6g_{\eta})$. As all Clifford+$T$ gates are assumed to be error-free, the only source of error is in approximating the controlled $e^{- i \frac{\Delta t}{M}Z}$. This results in an overall error of $2\eta$.
\end{proof}

We can now perform a gate count of bcQSE and provide a detailed account of the errors arising from both the controlled partial swap and bcQSE. Errors can arise from two sources: (a) Clifford+$T$ gate errors and (b) unitary simulation errors. We denote the error (in operator norm) arising from a single application of each Clifford+$T$ gate by the vector $\vec{\epsilon}_{g} := (\epsilon_{H},\epsilon_{S},\epsilon_{W},\epsilon_{CNOT},\epsilon_{T})$. Theorem~\ref{lemCtrlInfSwap} then says that the controlled partial swap can be performed with imperfect gates to an error $\vec{\epsilon}_{g} \cdot \vec{g}(\eta) + 2 \eta$, combining errors both from the elementary gates and the simulations of $e^{\pm i \frac{\Delta t}{2M}Z}$. We now consider the overall error in bcQSE.

\begin{theorem}\label{Res:bcQSEg}
{\rm (bcQSE gate count and error)}
Consider $N$ processing qubits, a single learning qubit, $nM$ collections of $N$ data supplying qubits and $nM$ ancillary qubits in state $\ket{0}$. There exists a quantum circuit of controlled partial swap operations which performs bcQSE to error $2\eta$ in operator norm using a gate count of $nM \vec{g}(\eta)$.
Using imperfect Clifford+$T$ gates with errors $\vec{\epsilon}_{g} = (\epsilon_{H},\epsilon_{S},\epsilon_{W},\epsilon_{CNOT},\epsilon_{T})$, such bcQSE is performed to an error in the operator norm of
\begin{equation}\label{Eq:bcQSEError}
\epsilon=\alpha \frac{t^{2}}{n}+n M \left(\vec{\epsilon}_{g} \cdot \vec{g}(\eta)+ 2 \eta \right)
\end{equation}
for some constant $\alpha>0$.
\end{theorem}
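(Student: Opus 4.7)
The plan is to view the full bcQSE circuit as a product of $nM$ controlled partial swap blocks (as in Fig.~\ref{fig2}(b)), apply Theorem~\ref{lemCtrlInfSwap} to each block to obtain its Clifford+$T$ gate count and per-block error, and then combine the compounded per-block error with the intrinsic bcQSE simulation error from Theorem~\ref{BatchedQSE} via the triangle inequality in operator norm.

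For the gate count, I would simply observe that by construction bcQSE uses $n$ batches of $M$ controlled partial swaps each, so there are exactly $nM$ controlled partial swap operations in the full circuit. Each acts on the $N$ processing qubits, the learning qubit, one fresh collection of $N$ data qubits, and one ancilla initialized in $\ket{0}$, and by Theorem~\ref{lemCtrlInfSwap} admits an implementation with gate count $\vec{g}(\eta)$, giving a total of $nM\, \vec{g}(\eta)$.

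For the error, I would split it into two contributions. The first is the bcQSE simulation error from Theorem~\ref{BatchedQSE}, which is at most $\alpha t^{2}/n$ in operator norm for some constant $\alpha>0$ (the operator norm on a unitary being controlled by the diamond norm of the induced channel). The second is the per-block implementation error $\vec{\epsilon}_g \cdot \vec{g}(\eta) + 2\eta$ coming from the Clifford+$T$ gate imperfections together with the Ross--Selinger approximation of $e^{\pm i \frac{\Delta t}{2M} Z}$; by the standard telescoping bound $\|U_{1}\cdots U_{K} - V_{1}\cdots V_{K}\| \leq \sum_{k=1}^{K} \|U_k - V_k\|$ for products of unitaries, this compounds to at most $nM\bigl(\vec{\epsilon}_g \cdot \vec{g}(\eta) + 2\eta\bigr)$ over the full circuit. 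Adding the two contributions by the triangle inequality delivers Eq.~(\ref{Eq:bcQSEError}).

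The one technical wrinkle, which I expect to be the main obstacle, is that the simulation error from Theorem~\ref{BatchedQSE} is originally stated in diamond norm while Theorem~\ref{lemCtrlInfSwap} measures per-block errors in operator norm. This is straightforward to reconcile because all circuits in question are compositions of unitaries (or their Clifford+$T$ approximants) on a fixed number of qubits, so the two norms agree up to a small constant factor that is absorbed into $\alpha$. Beyond this book-keeping, the argument is essentially an application of the triangle inequality and the telescoping lemma for products of unitaries.
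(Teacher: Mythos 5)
Your proposal is correct and follows essentially the same route as the paper's proof: count the $nM$ controlled partial swap blocks via Theorem~\ref{lemCtrlInfSwap}, compound their per-block errors linearly, and add the $\alpha t^{2}/n$ simulation error from Theorem~\ref{BatchedQSE} with the same operator-norm/diamond-norm remark. You merely make explicit the telescoping bound for products of unitaries that the paper leaves implicit.
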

\begin{proof}
This result holds since one needs to perform $nM$ controlled partial swaps to realize the bcQSE in Theorem~\ref{BatchedQSE}. 
A single application of the controlled partial swap operation can be carried out with an error in operator norm of  $\vec{\epsilon}_{g} \cdot \vec{g}(\eta)+ 2 \eta$. This operation must be repeated $nM$ times in bcQSE. Theorem~\ref{BatchedQSE} tells us that $\alpha\frac{t^{2}}{n}$ additional error (in operator norm, which is upper bounded by the diamond norm) arises due to approximating controlled QSE with the batched process, with $\alpha>0$ a constant factor determined by $\rho$.
\end{proof}

Performing bcQSE with imperfect Clifford+$T$ gates means that the error cannot be brought arbitrarily close to zero simply by increasing the number of batches $n$. Indeed, Eq.~(\ref{Eq:bcQSEError}) shows that the dominant error for large $n$ stems from the need to perform a large number of imperfect controlled partial swap operations. One can find an optimal regime by using a specific $n$ to minimize the resultant error. For a fixed choice of $\eta$, one can find the optimal $n$ via $\partial \epsilon/\partial n =0$, which results in
\begin{equation}\label{Eq:FixedError}
n_{\rm opt} = t\sqrt{\frac{\alpha }{M(\vec{\epsilon}_{g} \cdot \vec{g}(\eta)+2\eta)}}.
\end{equation}
with the constraint that $n_{\rm opt}\geq1$. If the gate error $M \left(\vec{\epsilon}_{g} \cdot \vec{g}(\eta)+ 2 \eta \right)$ is too large then the optimized $n_{\rm opt}$ is pushed below $1$ and the user must fix $n=1$. In this scenario, batched quantum state exponentiation fails. Practically, this constraint places limitations on the batch size $M$ and the individual gate errors for successful bcQSE.

Note that we find $n_{\rm opt}  = \Ord{t/ \sqrt{MN}}$ since $\alpha$, $\vec{\epsilon}_{g}$, and $\eta$ are constants and we have the lower bound
$\vec{g}(\eta)=\Omega\left(N\right)$ from Eq.~(\ref{eqGateCount}). As $N,M \geq 1$, we can conveniently use $n_ {\rm opt}  = \Ord{t}$, keeping in mind that $n_{\rm opt}\geq 1$. Using Eq.~(\ref{Eq:FixedError}) in Eq.~(\ref{Eq:bcQSEError}) results in the error $\epsilon = 2 t \sqrt{\alpha M (\vec{\epsilon}_{g} \cdot \vec{g}(\eta)+2\eta)}=\Ord{t \sqrt{M N}}$. This error is independent of the number of discretization steps taken and is the best one can do in the presence of constant gate errors. The user can also vary $\eta$ to choose a satisfactory compromise between the number of batches (which influences the gate count) and the overall error.

However, using quantum error correction the user may be free to change the Clifford+$T$ gate errors at a cost of increased physical resources, e.g.~by combining ancilla qubits and measurements with imperfect physical gates to perform a low error logical gate~\cite{nielsen2002quantum}. We can then represent the additional gate resources required to enact the Clifford+$T$ gates to an error $\vec{\epsilon}_{g}$  by the $5\times 5$ matrix $\bm{G}_{\rm ec}(\vec \epsilon_g)$, with the obtainable error being arbitrarily small provided that the physical gates have an error below a constant threshold. Each column of 
$\bm{G}_{\rm ec}(\vec \epsilon_g)$ is set to contain the physical gates required to perform one of the elementary logical gates to the corresponding error in $\vec{\epsilon}_{g}$. For example, the user can decrease a logical CNOT gate error $\epsilon_{CNOT}$ using a quantum error-correcting code that requires multiple noisy CNOTs along with other physical gates. 
The fourth column of $\bm{G}_{\rm ec}(\vec \epsilon_g)$ then determines the number of basic gates required for such a logical CNOT. 
We also denote the additional qubits required by the vector $\vec{q}_{\rm ec}(\vec \epsilon_g):= \left(q_{H}(\epsilon_{H}),q_{S}(\epsilon_{S}),q_{W}(\epsilon_{W}),q_{CNOT}(\epsilon_{CNOT}),q_{T}(\epsilon_{T})\right)$.

Recall that the error $2\eta$ associated with approximating the two time dependent gates $e^{\pm i \frac{\Delta t}{2M}Z}$ can be freely chosen. We know from Ross and Selinger~\cite{ross2014optimal} that the cost in terms of $T$ gates to perform a time-dependent $Z$ unitary to error $\eta$ is $\tOrd{\log(1/\eta)}$. Using quantum error correction methods, it is reasonable to assume that for the elementary gates themselves, every bit in precision also takes a linear cost to achieve, i.e.~$\bm{G}_{\rm ec}(\vec \epsilon_g)=\Ord{\log\left( \frac{1}{\vec{\epsilon}_{g}}\right)}$ and
 $\vec{q}_{\rm ec}(\vec \epsilon_g)=\Ord{\log\left( \frac{1}{\vec{\epsilon}_{g}}\right)}$, both elementwise~\cite{gottesman2009introduction}.
This setting allows for controlling the overall error $\epsilon$ of performing bcQSE. We now account for the qubit and gate count in this case.

\begin{theorem}\label{thmArbError}
{\rm (Performing bcQSE to arbitrary error)}
Let $\epsilon>0$ be the desired accuracy of bcQSE for $M$ quantum states per batch and $N$ qubits encoding every vector in a batch. Assume Clifford+$T$ gate errors $\vec{\epsilon}_{g}$, which take  additional resources in terms of gates 
$\bm{G}_{\rm ec}(\vec \epsilon_g)=\Ord{\log\left( \frac{1}{\vec{\epsilon}_{g}}\right)}$ and qubits
 $\vec{q}_{\rm ec}(\vec \epsilon_g)=\Ord{\log\left( \frac{1}{\vec{\epsilon}_{g}}\right)}$ to achieve 
 via quantum error correction. 
The number of batches required to perform bcQSE can be set to $n = \Ord{\frac{t^{2} + 1}{\epsilon}}$, where $n\geq 1$, requiring a number of physical gates and physical qubits of both
$\Ord{(N + \log (nM)) \log (nM (N+ \log nM ))} = \tOrd{N \log (nMN)}$ per single partial swap.
\end{theorem}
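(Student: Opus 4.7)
The plan is to start from the error formula in Theorem~\ref{Res:bcQSEg}, namely
\begin{equation*}
\epsilon = \alpha \frac{t^2}{n} + nM\bigl(\vec{\epsilon}_g \cdot \vec{g}(\eta) + 2\eta\bigr),
\end{equation*}
and allocate half of the error budget to each term. The batching term $\alpha t^2/n \leq \epsilon/2$ rearranges to $n \geq 2\alpha t^2/\epsilon$. Enforcing the constraint $n\geq 1$ from Theorem~\ref{BatchedQSE}, I would set $n = \max\{1,\lceil 2\alpha t^2/\epsilon\rceil\} = \Ord{(t^2+1)/\epsilon}$, which yields the first claim. The remaining $\epsilon/2$ is used to bound the per-gate contributions of the controlled partial swaps.

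Next I would choose the approximation parameters. To make $nM(\vec{\epsilon}_g\cdot\vec{g}(\eta)+2\eta)\leq \epsilon/2$, I would set $\eta = \Ord{\epsilon/(nM)}$ so that the $2\eta$ contribution is handled, and then demand each component of $\vec{\epsilon}_g$ to satisfy $\epsilon_g = \Ord{\epsilon/(nM\,|\vec{g}(\eta)|)}$. Using the explicit expression from Eq.~(\ref{eqGateCount}), $|\vec{g}(\eta)| = \Ord{N + g_\eta} = \Ord{N + \log(nM/\epsilon)}$, and for the asymptotic statement I can absorb the $\log(1/\epsilon)$ factor into the $\Ord{\cdot}$ to write $|\vec{g}(\eta)| = \Ord{N+\log(nM)}$ up to logarithmic factors. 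Hence each logical gate must have error at most $\Ord{1/(nM(N+\log(nM)))}$ (treating $1/\epsilon$ as logarithmic overhead absorbed by $\tilde{\Ord{\cdot}}$).

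Then I would invoke the assumed error-correction cost model $\bm{G}_{\rm ec}(\vec\epsilon_g) = \Ord{\log(1/\vec\epsilon_g)}$ and $\vec{q}_{\rm ec}(\vec\epsilon_g)=\Ord{\log(1/\vec\epsilon_g)}$ from the paragraphs preceding the theorem. Substituting the required $\vec\epsilon_g$ gives a per-logical-gate overhead of $\Ord{\log(nM(N+\log nM))}$ in both physical gates and physical qubits. Multiplying by the number of logical gates per controlled partial swap, which is $\Ord{N+\log(nM)}$, yields the per-partial-swap cost $\Ord{(N+\log(nM))\log(nM(N+\log nM))}$. The simplification to $\tilde{\Ord{N\log(nMN)}}$ follows because $N+\log(nM) = \Ord{N\log(nM)}$ trivially and the inner $\log(nM(N+\log nM))=\Ord{\log(nMN)}$ up to iterated logarithms absorbed by the tilde.

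The main obstacle I anticipate is the bookkeeping in the second step: cleanly propagating the dependence of $\eta$ and $\vec\epsilon_g$ on the (as-yet-unfixed) $n$ without introducing circular constraints, and verifying that the implicit $1/\epsilon$ factors inside the logarithms can legitimately be suppressed by the $\tilde{\Ord{\cdot}}$ notation in the final simplification. Everything else is routine substitution into Theorems~\ref{BatchedQSE}, \ref{lemCtrlInfSwap}, and \ref{Res:bcQSEg} together with the stated logarithmic overhead of fault-tolerant compilation.
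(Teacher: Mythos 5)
Your proposal is correct and follows essentially the same route as the paper: substitute into the error formula of Theorem~\ref{Res:bcQSEg}, choose $\eta$ and $\vec{\epsilon}_{g}$ small enough, invoke the logarithmic error-correction cost model, and multiply by the gate count of Theorem~\ref{lemCtrlInfSwap}. The one place the paper does something cleaner --- and it directly resolves the obstacle you flag at the end --- is the allocation of the error budget. Rather than splitting $\epsilon$ in half and fixing $n$ from the Trotter term alone, the paper sets $\eta = \delta'/(n^{2}M)$ and $\vec{\epsilon}_{g} = \delta''/\bigl(n^{2} M\, \vec{g}(\delta'/(n^{2}M))\bigr)$ for constants $\delta',\delta''>0$, so that the total gate-induced error is $\delta/n$ with $\delta = 2\delta'+5\delta''$; the total error then reads $\epsilon = (\alpha t^{2}+\delta)/n$, which is solved for $n$ only at the very end. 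This has two consequences: there is no circularity, because $\eta$ and $\vec{\epsilon}_{g}$ depend only on $n$, $M$ and constants (not on $\epsilon$), so every logarithm in the resource count is genuinely $\Ord{\log(nM(N+\log nM))}$ with no hidden $\log(1/\epsilon)$ to sweep into the $\tOrd{\cdot}$; and $n = (\alpha t^{2}+\delta)/\epsilon = \Omega(1/\epsilon)$ holds automatically, which is where the ``$+1$'' in $n=\Ord{(t^{2}+1)/\epsilon}$ comes from. In your version, when $t$ is small you set $n=1$, and then $\eta = \Ord{\epsilon/(nM)}$ forces a genuine $\log(1/\epsilon)$ factor that is not controlled by $\log(nM)$ and is not legitimately absorbed by the plain $\Ord{\cdot}$ in the first expression of the statement; adopting the paper's $1/(n^{2}M)$ scaling (or explicitly enforcing $n=\Omega(1/\epsilon)$) fixes this and recovers the claimed bound exactly.
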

\begin{proof}
Take a constant $\delta' >0$ and set
$\eta = \frac{\delta'}{n^{2}M}$. From Theorem \ref{lemCtrlInfSwap} this implies that
$\vec{g} \left(\frac{\delta'}{n^{2} M}\right)=\Ord{N + \log (nM)}$. With another constant $\delta''>0$, take 
$\vec{\epsilon}_{g} = \delta''/\left (n^2 M \vec{g} \left(\frac{\delta'}{n^{2} M}\right)\right)$, where the inverse of $\vec g$ is defined element-wise. 
With $\vec 1$ the vector of ones, this implies that the gate count of a single partial swap is given by
$\vec 1^\intercal \bm{G}_{\rm ec}(\vec \epsilon_g) \vec{g} \left(\frac{\delta'}{n^{2} M}\right)=\Ord{\log\left( \frac{1}{\vec{\epsilon}_{g}}\right) (N + \log (nM)) }
=\Ord{(N + \log (nM)) \log (nM (N+ \log nM))}$ and the number of error correction qubits 
is given by $\vec{q}_{\rm ec}^{\ \intercal}(\vec \epsilon_g) \vec{g} \left(\frac{\delta'}{n^{2}M}\right)=\Ord{(N + \log (nM)) \log (nM (N+ \log nM))}$. 
With $\delta =  2\delta' + 5\delta''$ and $\alpha > 0$, we then have from Theorem \ref{Res:bcQSEg}
for the total error that $\epsilon = (\alpha t^2 + \delta)/n$, which implies $n = \Ord{\frac{t^{2} + 1}{\epsilon}}$.
\end{proof}

It is important to consider the number of physical qubits and the number of physical gates for any implementation of a quantum device, along with the resultant error. For bcQSE, one can consider how these quantities scale as a function of the number of qubits $N$ in a piece of quantum data, the number of quantum data states $M$, and the desired simulation time $t$. This can be achieved by first considering the number of batches $n$ required to perform bcQSE. We have discussed two regimes for $n$, for which always $n\geq 1$. They are, (i) when the overall error $\epsilon$ is partly determined by the fixed-error Clifford+$T$ gates, so that $n = \Ord{t}$ according to Eq.~\ref{Eq:FixedError}, and (ii) when the overall error $\epsilon$ can be determined by the user by altering the individual errors in the Clifford+$T$ gates via quantum error correction, so that $n = \Ord{\frac{t^{2}+1}{\epsilon} }$.

The total number of logical qubits required to perform bcQSE is $(nM+1)(N+1)=\Ord{nMN}$. Indeed, we require $N$ processing qubits and a single learning qubit, which receive $nM$ partial swaps requiring $nMN$ qubits along with $nM$ ancilla qubits in $| 0 \rangle$. On the other hand, we have seen in Theorem~\ref{Res:bcQSEg} that the total number of gates required for bcQSE is $nM \vec{g}(\eta)=\Ord{nMN}$.
In the error corrected setting, each logical qubit is replaced by multiple physical qubits and 
additional physical gates are required to implement the logical gates.
The scaling of the relevant quantities is summarized in Table~\ref{Table:Scaling}. In each case, scaling with $N$, $M$, and $t$, along with $\epsilon$ for regime (ii), is never worse than polynomial, indicating efficiency. 
A physical implementation can in principle handle exponentially large vectors without hitting a ``brick wall'' of efficiency. 
The second regime using quantum error correction is more costly than the first regime because the user must pay a qubit and gate cost to control the error.

\begin{table}
\caption{\label{Table:Scaling} Efficiency analysis of performing bcQSE using Clifford+$T$ gates for relevant figures of merit in terms of the number of processing qubits $N$, the number of data states $M$, the simulation time $t$, and the error $\epsilon$.}
\begin{ruledtabular}
\begin{tabular}{lcc}
 & \multicolumn{2}{c} {Gate error regime} \\
Quantity & (i) fixed & (ii) error corrected \\ \hline
Error & $\Ord{t \sqrt{MN}}$ & $\epsilon$ (user specified) \\
$n\geq 1$ & $\Ord{t}$ & $\Ord{\frac{t^{2}+1}{\epsilon}}$ \\
Qubit number & $\Ord{t MN}$ & $\tOrd{\frac{t^{2} +1}{\epsilon} M N  \log\left(\frac{t^{2}+1}{\epsilon}M N\right)}$  \\
Gate count & $\Ord{t MN}$ & $\tOrd{\frac{t^{2}+1}{\epsilon} M N \log\left(\frac{t^{2}+1}{\epsilon}M N\right)}$
\end{tabular}
\end{ruledtabular}
\end{table}

\section{Application to quantum Hebbian learning}\label{Sec:qHeb}

Controlled QSE allows a quantum state $\rho$ to be simulated on another system for a chosen time $t$, i.e. for application of the unitary transformation $e^{-i t \ket{1}\bra{1}\otimes \rho}$. This transformation can provide a way to use $\rho$ operationally within other quantum algorithms. For example, one can perform matrix multiplication with $\rho$~\cite{wiebe2014quantum}, or find its eigenvalues using quantum phase estimation~\cite{harrow2009quantum}. Our protocol for bcQSE is suited to the case where there is access to the pure quantum states composing the ensemble. Indeed, we show in the following how bcQSE can be thought of as a prototype quantum version of Hebbian learning.

\subsection{Classical Hebbian learning}

We first introduce the conventional approach to Hebbian learning to help motivate our following definition of quantum Hebbian Learning. Consider an artificial neural network consisting of $d$ binary-valued neurons $x_{i} \in \{1,-1\}$ with $i \in \{1, 2, \ldots , d\}$~\cite{mcculloch1943logical}, that are together described by the activation pattern vector $\bm{x} = (x_{1}, x_{2}, \ldots ,x_{d})^\intercal$. Each neuron is visible and every pair of neurons can be connected with an undirected weight. Suppose that we are supplied with $M$ activation patterns $\bm{x}^{(m)}$ that consist the training data, which may be e.g. the pixel data of images. These activation patterns are summarized by the Hebbian weight matrix.
\begin{defi}\label{Def:HebWeight}
The normalized Hebbian weight matrix $W$ is defined as $W_{ij}=\frac{1}{Md}\sum_{m=1}^M x^{(m)}_i x^{(m)}_j$ for $i \neq j$ and $W_{ii} = 0$.
\end{defi}
Hence, $W$ is a square $d$-dimensional, real, and symmetric matrix. It encodes the simple principle from Hebb that neurons that fire together also wire together~\cite{hebb1949organization}. The normalization of $W$ is such that the operator norm satisfies $\Vert W \Vert = \Ord{1}$. More generally, we do not have to restrict the training data $\bm{x}^{(m)}$ to be activation patterns of neural networks, but can instead allow them to be arbitrary vectors corresponding to relevant data.

We can then define Hebbian learning.
\begin{defi}
Hebbian learning is the process of:
\begin{enumerate}
\item{Constructing the Hebbian weight matrix $W$ from the set of training data $\bm{x}^{(m)}$.}
\item{Using the Hebbian weight matrix within the operation of a machine learning algorithm.}
\end{enumerate}
\end{defi}

\subsection{Quantum Hebbian learning}

Quantum Hebbian learning (QHL) can be thought of as the quantum analogy to Hebbian learning.
\begin{defi}\label{defHebbian}
We define quantum Hebbian learning to include any quantum algorithm that performs the following:
\begin{enumerate}
\item Provides a quantum analogue representation of the Hebbian weight matrix $W$.
\item Allows access to this quantum representation and makes the contained data operationally available in a quantum algorithm.
\end{enumerate}
In addition, we call a quantum Hebbian learning method { \rm dimension efficient} if 1) and 2) require $\Ord{{\rm poly}(\log (d)) }$ quantum bits and operations.
We call a quantum Hebbian learning method { \rm data efficient} if 1) and 2) require $\Ord{{\rm poly}(\log (M)) }$ quantum bits and operations.
\end{defi}
The link to bcQSE can now be elucidated by considering learning with generic quantum training data $\{|x^{(m)}\rangle\}_{m=1}^{M}$.
\begin{theorem} \rm{(QHL using bcQSE)}
With access to multiple copies of quantum training data $\{|x^{(m)}\rangle\}_{m=1}^{M}$, QHL can be achieved through bcQSE.
\end{theorem}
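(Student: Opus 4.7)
The plan is to verify both conditions in Definition~\ref{defHebbian} by identifying the statistical ensemble $\rho = \frac{1}{M}\sum_{m=1}^M |x^{(m)}\rangle\langle x^{(m)}|$ from Eq.~(\ref{Eq:Ensemble}) as the natural quantum analogue of the Hebbian weight matrix $W$, and then invoking Theorem~\ref{BatchedQSE} to make $\rho$ operationally accessible to downstream quantum algorithms.

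For the first condition, I would show that $\rho$ encodes $W$ by writing it in the computational basis. For an activation pattern $\bm{x}^{(m)}$ encoded in the normalized quantum state $|x^{(m)}\rangle = \frac{1}{\sqrt{d}}\sum_i x^{(m)}_i |i\rangle$, a direct computation gives $\langle i|\rho|j\rangle = \frac{1}{Md}\sum_m x^{(m)}_i x^{(m)}_j$, which coincides exactly with $W_{ij}$ for all $i \ne j$. The only discrepancy is on the diagonal, where $W_{ii}=0$ by definition but $\rho_{ii}$ is nonzero in general (for binary data it equals the classically-known constant $1/d$). This residual diagonal is a classically-computable shift that can either be accounted for in post-processing or absorbed by a trivial diagonal Hamiltonian, so $\rho$ faithfully represents $W$ while being stored in only $N = \log_2 d$ qubits.

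For the second condition, I would invoke Theorem~\ref{BatchedQSE} directly: given access to multiple copies of the training data $\{|x^{(m)}\rangle\}$, bcQSE implements the controlled unitary $e^{-it|1\rangle\langle 1|\otimes \rho}$ on $N$ processing qubits to arbitrary diamond-norm accuracy $\epsilon$ using $n=\Ord{t^{2}/\epsilon}$ batches. Controlled Hamiltonian evolution by $\rho$ is precisely the interface required by canonical quantum subroutines such as phase estimation (which then extracts the spectrum of $\rho$, and hence of $W$ up to the known diagonal shift) or the matrix multiplication protocol of Ref.~\cite{wiebe2014quantum}. This makes the Hebbian information usable inside a larger quantum algorithm, as required by part~(2) of Definition~\ref{defHebbian}. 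Since $N = \log_{2} d$, and the resource counts in Theorem~\ref{thmArbError} and Table~\ref{Table:Scaling} are polynomial in $N$, $M$, $t$, and $1/\epsilon$, bcQSE is automatically dimension efficient in the sense of the definition.

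The main obstacle I expect is not technical but conceptual: reconciling the classical definition of $W$ (specialized to binary activation patterns, with a forced zero diagonal) with the more general pure-state training data $|x^{(m)}\rangle$ permitted in the theorem statement. The diagonal of $\rho$ will generally carry information that $W$ discards, so care is needed to argue that the phrase ``quantum analogue representation'' in Definition~\ref{defHebbian} should be read as \emph{containing} the Hebbian matrix up to a classically-known diagonal shift, rather than being literally equal to it. Once this point is articulated, the remainder of the argument is a direct assembly of the correspondence $\rho \leftrightarrow W$ with the guarantees already established by Theorem~\ref{BatchedQSE}.
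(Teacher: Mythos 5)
Your proposal is correct and follows essentially the same route as the paper: identify $\rho$ (up to the identity shift $\mathbb{I}_d/d$, which fixes the zero diagonal) as the quantum representation of $W$, and invoke Theorem~\ref{BatchedQSE} to make it operationally available via $e^{-it\ket{1}\bra{1}\otimes\rho}$. Your ``trivial diagonal Hamiltonian'' correction is exactly the paper's per-batch conditional phase $e^{i\Delta t\ket{1}\bra{1}\otimes\mathbb{I}_d/d}$, which the paper also notes can usually be skipped since the induced error is only $\Ord{1/d}$.
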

\begin{proof}
We first note by referring to Definition~\ref{Def:HebWeight} that $\rho - \mathbb{I}_{d}/d$ forms the quantum analogue of the Hebbian weight matrix, where $\rho = (1/M)\sum_{m=1}^{M}|x^{(m)}\rangle \langle x^{(m)}|$ and with $\mathbb{I}_{d}$ the identity operator. Then, using Theorem~\ref{BatchedQSE} we see that $\rho$ is operationally accessible through the transformation $e^{-it \ket{1}\bra{1} \otimes \rho}$ acting on a learning qubit and $N$ processing qubits. Furthermore, the explicit operational availability of $W$ is guaranteed from $\rho$ since
\begin{equation}
e^{-i \Delta t \ket 1 \bra 1 \otimes W } = e^{-i \Delta t \ket 1 \bra 1 \otimes \rho} e^{i \Delta t \ket 1 \bra 1 \otimes \frac{\mathbb{I}_{d}}{d}} + \Ord{\Delta t^2},
\end{equation}
i.e. so that in every batch of bcQSE one can apply a conditional phase $e^{i \Delta t \ket 1 \bra 1 \otimes \frac{\mathbb{I}_{d}}{d}}$, which is equivalent to applying a single qubit time-dependent phase and Pauli $Z$ gate for each batch. Note that this step can typically be skipped as the incurred phase error is small $\Ord{\frac{1}{d}}$.
\end{proof}

Thus, we have shown that, from our definitions, QHL and batched quantum state exponentiation are closely related. The given error analysis of bcQSE applies similarly to QHL, and results in a \textit{dimension-efficient} simulation. Data efficiency is not achieved with the methods shown in this work.

\begin{theorem}
If $t = \Ord{{\rm poly} (N  M) }$ and $1/\epsilon = \Ord{{\rm poly}( N M)}$, then QHL realized through bcQSE is dimension efficient.
\end{theorem}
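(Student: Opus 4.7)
The plan is to trace the resource bounds obtained in Theorem~\ref{thmArbError} and summarized in Table~\ref{Table:Scaling} under the specified growth conditions on $t$ and $\epsilon$, and then invoke the equivalence between QHL and bcQSE established in the preceding theorem. Recall that under Definition~\ref{defHebbian} a QHL method is dimension efficient if constructing the quantum Hebbian weight matrix and making it operationally available both use $\Ord{\mathrm{poly}(\log d)}$ qubits and operations. Since every training state is encoded on $N$ qubits with $d = 2^N$, this is equivalent to a scaling that is $\Ord{\mathrm{poly}(N)}$ for fixed batch size $M$.

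First I would apply Theorem~\ref{thmArbError} in the error-corrected regime to read off the number of batches required to reach accuracy $\epsilon$, namely $n = \Ord{(t^{2}+1)/\epsilon}$. Substituting the hypotheses $t = \Ord{\mathrm{poly}(NM)}$ and $1/\epsilon = \Ord{\mathrm{poly}(NM)}$ immediately yields $n = \Ord{\mathrm{poly}(NM)}$, which also satisfies the constraint $n \geq 1$. Next, I would use the per-swap bound from Theorem~\ref{thmArbError}, which states that a single controlled partial swap costs $\tOrd{N \log(nMN)}$ physical gates and physical qubits when the Clifford+$T$ errors are tuned via quantum error correction.

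Combining these bounds with the total of $nM$ controlled partial swaps in bcQSE gives a total physical qubit and gate count of $nM \cdot \tOrd{N \log(nMN)} = \tOrd{\mathrm{poly}(NM) \cdot N \log(\mathrm{poly}(NM))} = \tOrd{\mathrm{poly}(NM)}$. Since the preceding theorem established that QHL realized through bcQSE inherits exactly the resource requirements of bcQSE (up to an optional single-qubit phase correction per batch accounting for the $\mathbb{I}_d/d$ shift in the weight matrix), the same polynomial bound applies to QHL. Because $\log d = N$, this scaling is polynomial in $\log d$ when $M$ is treated as a fixed parameter independent of the dimension, which matches the definition of dimension efficiency.

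The main obstacle, and the point I would emphasize in the write-up, is the mild abuse of terminology: the resources scale as $\mathrm{poly}(NM)$ rather than $\mathrm{poly}(N)$ alone, and the hypotheses allow $t$ and $1/\epsilon$ to depend on $M$. The resolution is that dimension efficiency is about the scaling in the dimension $d$ of a single data vector, not in the data set size $M$; with $M$ regarded as a structural parameter of the problem (governing data efficiency per the same definition), the resulting $\mathrm{poly}(N)$ behavior in $\log d$ is precisely what is needed. No other ingredients are required beyond the polynomial closure properties of $\mathrm{poly}(\cdot)$ and the $\tOrd{\cdot}$ bookkeeping already done in Theorem~\ref{thmArbError}.
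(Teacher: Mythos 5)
Your proposal is correct and follows essentially the same route as the paper, which simply reads the qubit and gate counts off Table~I (equivalently, Theorem~6) and observes that under the stated hypotheses on $t$ and $1/\epsilon$ they are polynomial in $N=\log_2 d$. Your additional remark about the $\mathrm{poly}(NM)$ versus $\mathrm{poly}(\log d)$ bookkeeping is a fair clarification of the paper's implicit convention (with $M$ handled separately under data efficiency), but it does not change the argument.
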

\begin{proof}
From Table~\ref{Table:Scaling}, using bcQSE to perform QHL with Clifford+$T$ gates is polynomial in $N$ for the required number of qubits and gate count.
\end{proof}

Let us elaborate further on the nature of the quantum training data. On the one hand, we can assume the training data to be inherently quantum, originating from another quantum device. Alternatively, we can attempt to encode classical training data into quantum states. Suppose one is given the $d$-dimensional training data $\bm{x}^{(m)}$ where without loss of generality  $d = 2^{N}$.
The Hebbian weight matrix is written as in the quantum case
\begin{equation} \label{Eq:Encoding}
W = \frac{1}{M d}  \left[ \sum_{m=1}^{M}\bm{x}^{(m)}\left(\bm{x}^{(m)}\right)^\intercal \right] - \frac{\mathbb{I}_{d}}{d}. 
\end{equation}
Assume the following oracle.
\begin{oracle}\label{oracle1}
We are given $M$ oracles such that the operation $\ket{0\cdots 0} \to \ket{\bm x^{(m)}}$ on $N=\lceil \log d \rceil$ qubits can 
be performed, with $|\bm{x}^{(m)}\rangle = \frac{1}{\sqrt{d}} \sum_{i} x_{i}^{(m)}\ket{i}$ and $m=1,\dots, M$. Each oracle has a runtime of at most $T_{\rm data}$.
\end{oracle}
Since the oracle generates quantum states as required for batched quantum state exponentiation, we have the following corollary.
\begin{cor} \label{Cor:clInputQHL}
Given Oracle \ref{oracle1},  classical-input QHL can be performed.
If $t = \Ord{{\rm poly}(N  M)}$, $1/\epsilon = \Ord{{\rm poly} (N M)}$, and $T_{\rm data}=\Ord{{\rm poly} (NM)}$ then classical-input QHL is dimension efficient. 
\end{cor}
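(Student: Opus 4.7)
The plan is to reduce classical-input QHL to the quantum-input QHL already established via bcQSE, treating Oracle~\ref{oracle1} as the bridge that converts each classical training vector $\bm{x}^{(m)}$ into the pure quantum state $\ket{\bm{x}^{(m)}} = \frac{1}{\sqrt{d}}\sum_i x_i^{(m)} \ket{i}$ that bcQSE consumes as data input.

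First I would use Oracle~\ref{oracle1} to prepare the $nM$ copies of training states needed by bcQSE as batched input, i.e.~apply the $m$-th oracle once per required copy of $\ket{\bm{x}^{(m)}}$ on fresh $N$-qubit registers. These prepared states are precisely the input to Theorem~\ref{BatchedQSE}, so feeding them into bcQSE simulates $e^{-i t \ket{1}\bra{1}\otimes \rho}$ on a learning qubit and $N$ processing qubits, where $\rho = (1/M)\sum_m \ket{\bm{x}^{(m)}}\bra{\bm{x}^{(m)}}$. By Eq.~(\ref{Eq:Encoding}) this $\rho$ differs from the Hebbian matrix $W$ only by the $\mathbb I_d/d$ shift that was already handled in the preceding QHL-via-bcQSE theorem; thus classical-input QHL is realized by the composite procedure ``oracle preparation + bcQSE + identity phase correction''.

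For the efficiency claim I would combine two resource counts. From Table~\ref{Table:Scaling}, performing bcQSE to accuracy $\epsilon$ uses $n = \Ord{(t^2+1)/\epsilon}$ batches and $\tOrd{\mathrm{poly}(nMN)}$ physical qubits and physical gates. On top of this, state preparation costs $nM$ oracle calls, each of runtime at most $T_{\rm data}$, contributing an additive $nM\, T_{\rm data}$ to both qubit and gate counts. Substituting the hypotheses $t = \Ord{\mathrm{poly}(NM)}$, $1/\epsilon = \Ord{\mathrm{poly}(NM)}$, $T_{\rm data} = \Ord{\mathrm{poly}(NM)}$ makes $n$, $M$, and $T_{\rm data}$ all polynomial in $NM$, so the total resource count is polynomial in $N = \lceil \log d\rceil$ (and in $M$). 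This is the definition of dimension efficient QHL from Definition~\ref{defHebbian}.

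There is no genuinely hard step here; the argument is bookkeeping on top of Theorems~\ref{BatchedQSE}, \ref{Res:bcQSEg}, and \ref{thmArbError}. The only subtlety worth noting is that inserting the oracle between successive partial swaps does not disturb the diamond-norm error compounding in Theorem~\ref{BatchedQSE}, because each oracle call acts on a disjoint fresh register prior to its partial swap and is accounted for as an ideal state-preparation step within the budget $T_{\rm data}$.
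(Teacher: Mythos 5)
Your argument is correct and follows the same route the paper takes (the paper's entire justification is the one-line observation that Oracle~\ref{oracle1} produces exactly the states bcQSE consumes, so the preceding QHL-via-bcQSE and dimension-efficiency theorems apply). Your version just makes explicit the bookkeeping of the $nM\,T_{\rm data}$ oracle cost and the $\mathbb{I}_d/d$ shift that the paper leaves implicit.
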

Efficiency of classical-input QHL requires that the qubit training states $\ket{x^{(m)}}$ can  be produced efficiently. This is a topical question that requires future development, but has also not been conclusively ruled out~\cite{aaronson2015read,giovannetti2008quantum,soklakov2006efficient}.

\subsection{Phase estimation}

Here we focus on achieving QHL using bcQSE and elaborate further on the operational availability of $\rho$ through phase estimation, allowing for calculation of eigenvalues (and corresponding eigenvectors) of $\rho$.
\begin{theorem} {\rm (Phase estimation with bcQSE)} \label{corrPhaseEstimation}
An eigenvalue of $\rho$ can be estimated to error $\epsilon$ through phase estimation using $\Ord{\frac{1}{\epsilon}}$ applications of $e^{-i\rho}$. In the setting of Theorem \ref{thmArbError}, realizing bcQSE using Clifford+$T$ gates results in an overall overhead for phase estimation of:
\begin{itemize}
\item{$\Ord{\frac{1}{\epsilon^{3}}}$ batches of quantum data;}
\item{$\tOrd{\frac{MN}{\epsilon^{3}} \log \left( \frac{MN}{\epsilon^3} \right)}$ qubits;}
\item{$\tOrd{\frac{MN}{\epsilon^{3}} \log \left( \frac{MN}{\epsilon^3} \right)}$ Clifford+$T$ gates;}
\end{itemize}
\end{theorem}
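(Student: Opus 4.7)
The plan is to reduce the statement to two building blocks already in place: the textbook resource scaling of quantum phase estimation, and the per-simulation cost of bcQSE from Theorem~\ref{thmArbError}. I will not re-derive phase estimation; I will invoke it as a black box and only have to track the error budget carefully.

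First I would recall the standard phase estimation result~\cite{nielsen2002quantum}: to estimate an eigenphase of $e^{-i\rho}$ to accuracy $\epsilon$ with constant success probability, one applies the controlled unitaries $e^{-i 2^{k}\rho}$ for $k=0,1,\dots,\Ord{\log(1/\epsilon)}$, giving a total effective simulation time $t=\Ord{1/\epsilon}$ and $\Ord{1/\epsilon}$ applications of $e^{-i\rho}$. Crucially, bcQSE already produces the \emph{controlled} exponential $e^{-it\ket{1}\bra{1}\otimes\rho}$ as in Theorem~\ref{BatchedQSE}, so no separate controlization argument is required; the control qubit of bcQSE serves directly as the phase estimation ancilla bit.

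Next I would feed these numbers into Theorem~\ref{thmArbError}, setting the simulation time to $t=\Ord{1/\epsilon}$ and the per-simulation bcQSE error to $\Ord{\epsilon}$ so that the Hamiltonian simulation error does not dominate the phase estimation error. This immediately gives the batch count $n=\Ord{(t^{2}+1)/\epsilon}=\Ord{1/\epsilon^{3}}$, which is the first bullet. The total number of controlled partial swaps is then $nM=\Ord{M/\epsilon^{3}}$, and multiplying by the per-swap cost $\tOrd{N\log(nMN)}$ from Theorem~\ref{thmArbError} yields $\tOrd{(MN/\epsilon^{3})\log(MN/\epsilon^{3})}$ for both the physical qubit count and the Clifford+$T$ gate count, matching the remaining two bullets.

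The main subtlety I expect is the error budgeting. One has to choose the per-simulation accuracy tightly enough that the accumulated error across the $\Ord{1/\epsilon}$ applications still leaves room for the desired final precision, yet loosely enough to avoid inflating $n$ beyond $1/\epsilon^{3}$; setting the internal bcQSE error to a constant multiple of $\epsilon$ is the sweet spot and is exactly what Theorem~\ref{thmArbError} was designed to accommodate. A secondary modeling choice is whether to invoke bcQSE once for the largest time $t=\Ord{1/\epsilon}$ and re-use partial products inside phase estimation, or to invoke it separately for each $e^{-i 2^{k}\rho}$; either way the dominant cost comes from the longest simulation, so the scaling above is unchanged and no additional lemmas are needed beyond Theorems~\ref{BatchedQSE} and~\ref{thmArbError}.
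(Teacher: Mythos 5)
Your proposal is correct and follows essentially the same route as the paper: invoke phase estimation as a black box requiring $\Ord{1/\epsilon}$ controlled applications of $e^{-i\rho}$, then charge the cost to Theorem~\ref{thmArbError}; the paper merely books the error as $\Ord{1/\epsilon}$ unit-time simulations each performed to accuracy $\epsilon^{2}$, rather than your single simulation of time $t=\Ord{1/\epsilon}$ to accuracy $\Ord{\epsilon}$, and both give the same $n=\Ord{1/\epsilon^{3}}$ and hence the same qubit and gate counts. The only detail you leave implicit is that the phase estimation register itself costs an additional $\Ord{1/\epsilon}$ qubits and $\Ord{1/\epsilon^{2}}$ gates, which the paper explicitly notes are dominated by the bcQSE contributions.
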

\begin{proof}
Consider the spectral decomposition $\rho = \sum_{i}\lambda_{i}\ket{\lambda_{i}}\bra{\lambda_{i}}$ and suppose $N$ processing qubits are prepared in the eigenstate $\ket{\lambda_{i}}$. Using Kitaev's version Ref.~\cite{kitaev1995quantum,kimmel2017hamiltonian}, phase estimation allows $\lambda_{i}$ to be estimated to precision $\epsilon$ using $\Ord{\frac{1}{\epsilon}}$ controlled unitary applications of $e^{-i \rho}$. If the processing qubits are prepared in an arbitrary state $\ket{\psi} = \sum_{i}c_{i}\ket{\lambda_{i}}$, $\sum_{i}\left|c_{i}\right|^{2}=1$, then the eigenvalue $\lambda_{i}$ will be measured with probability $\left|c_{i}\right|^{2}$ and the resultant state of the processing qubits is $\ket{\lambda_{i}}$~\cite{nielsen2002quantum}.

Now, if each application of bcQSE for time $t=1$ can be achieved to error $\epsilon^2$, then the overall simulation error after $\Ord{\frac{1}{\epsilon}}$ repetitions is $\Ord{\epsilon}$. The number of steps required in any single bcQSE interaction for phase estimation is thus $n=\Ord{\frac{1}{\epsilon^2}}$. Referring to Table~\ref{Table:Scaling}, we see that all the bcQSE operations combined require $\Ord{\frac{1}{\epsilon^{3}}}$ batches of quantum data, $\tOrd{\frac{MN}{\epsilon^{3}} \log \left( \frac{M N}{\epsilon^3} \right)}$ qubits and gates.

Phase estimation also has an overhead. It requires $\Ord{\frac{1}{\epsilon}}$ additional qubits (referred to here as the learning qubits) containing the eigenvalues and an additional $\Ord{\frac{1}{\epsilon^{2}}}$ gates. Both of these numbers are dominated by their counterparts from bcQSE.
\end{proof}
Phase estimation is the prototypical operational usage of $\rho$, leading us to the following definition.
\begin{defi}
Quantum Hebbian eigensystem learning is any quantum algorithm that uses QHL to derive the eigenvalues and/or eigenvectors of the corresponding Hebbian weight matrix.
\end{defi}
We note that the above Corollary \ref{Cor:clInputQHL} implies that classical-input quantum Hebbian eigensystem learning can be dimension efficient using the encoding in Eq.~(\ref{Eq:Encoding}).

\section{Discussion}\label{sec:con}

This work presents a method of quantum state exponentiation by using batches of pure quantum data (bcQSE). Here, the controlled partial swap between processing and data qubits acts as the elementary transformation which must be repeated multiple times for varying registers of data qubits. We have decomposed  this elementary transformation into Clifford+$T$ gates, hence allowing for the realization of bcQSE through a compilation of Clifford+$T$ gates. An analysis of the efficiency of our decomposition was also provided. It must be emphasized that this decomposition is not necessarily optimal, and it would be of interest to compare to the results of quantum compilers $-$ which can aim to minimize the gate cost overhead~\cite{heyfron2017efficient}.

We have presented bcQSE within the context of quantum Hebbian learning. A general formalism for quantum Hebbian learning was constructed based upon ideas established in classical Hebbian learning. However, the application of bcQSE is not restricted to classical-input quantum Hebbian learning: a neuroscience analogy allows bcQSE to be thought of representing multiple levels of perception (as in the human brain.) In this analogy, the data qubits represent the environment, which is continually interacting with processing qubits acting as low-level senses. The learning qubits are then the higher levels of perception, using information from the processing qubits to learn about the environment. We further extended the concept of Hebbian learning to quantum phase estimation, showing how the elementary controlled partial swap operation can be built up to achieve important quantum algorithms with applicability for quantum learning.

Quantum Hebbian learning holds relevance for the teaching of quantum and classical neural networks, such as the Hopfield network (and its quantum versions~\cite{rebentrost2017quantum}), with application in pattern recognition and optimization. Future work can consider quantum analogues of more advanced classical learning techniques such as the Storkey learning rule~\cite{storkey1999basins}, or more general applications of the batching process in machine learning.

\acknowledgements
We acknowledge Seth Lloyd, Iman Marvian and George Siopsis for insightful discussions.

\bibliographystyle{apsrev}
\bibliography{QML2}

\end{document}